\newtheorem{lemma}{Lemma}
\newtheorem{theorem}{Theorem}
\newtheorem{proposition}{Proposition}
\def\bp{\noindent{\it Proof.}\ } 
\def\ep{\hfill $\Box$}  
\def\sumj{\sum_{j=1}^J} 
\def\sumk{\sum_{k=1}^K} 
\def\prodj{\prod_{j=1}^J} 
\def\sumy{\sum_{y\in \Y}} 
\def\D{{\cal D}}
\def\X{{\cal X}}
\def\Y{{\cal Y}}
\def\Z{{\cal Z}}
\def\N{{\mathbb N}}
\def\R{{\mathbb R}}
\newcommand{\diff}{\mathop{}\mathopen{}\mathrm{d}}
\newcommand{\ind}{\ensuremath{\mathbbm{1}}}
\title{Performance of CSMA in Multi-Channel Wireless Networks\thanks{A preliminary version of this paper was presented at   CISS 2011 \cite{BF-ciss11}.}}
\author{Thomas Bonald\thanks{T.~Bonald is with the Department
of Computer Science and Networking, Telecom ParisTech, Paris, France. Email: 
thomas.bonald@telecom-paristech.fr} and Mathieu Feuillet\thanks{M.~Feuillet is with INRIA, Paris-Rocquencourt, France. Email: 
mathieu.feuillet@inria.fr}}
\date{\today}
\begin{document}

\maketitle

\begin{abstract}
We analyze the performance of CSMA  in multi-channel wireless networks, accounting for the random nature of traffic. Specifically, we assess the ability of CSMA to fully utilize the radio resources and in turn to  stabilize the network in a dynamic setting with flow arrivals and departures. We prove that CSMA is optimal in ad-hoc mode but not in infrastructure mode, when all data flows originate from or are destined to some access points, due to the inherent bias of CSMA against downlink traffic. We propose a slight modification of CSMA, that we refer to as \textit{flow-aware} CSMA, which corrects this bias and makes the algorithm optimal in all cases.
The analysis is based on some time-scale separation assumption which is proved valid in the limit of large flow sizes.
\\

\noindent
{\bf Keywords:} Wireless network, interference graph, CSMA, flow-level dynamics, time-scale separation, stability.
\end{abstract}

\section{Introduction}
The CSMA (Carrier Sense Multiple Access) algorithm is a  key component of  IEEE 802.11 networks. 
While it  proves successful in sharing a single  radio channel between a limited number of stations, 
its efficiency  is questionable in more involved environments with multiple radio channels and a large number of 
stations having different  interference constraints. In this paper, we analyse the ability of CSMA to fully utilize the radio resources in such environments, in both ad-hoc and infrastructure modes, accounting for the random nature of traffic. Specifically, each station attempts to access a randomly chosen radio channel after some random backoff time and transmits a packet over this channel if it is sensed idle. We study the random variations of the number of active wireless links induced by this random access algorithm and the random activity of users. In particular, we analyse the ergodicity of the associated Markov process, which characterizes the ability of CSMA to stabilize the network.

It turns out that, while CSMA is always efficient in ad-hoc mode, in the sense that the network is stable whenever possible, it is generally inefficient in infrastructure mode, when all data flows originate from or are destined to some  finite set of access points. 
This is due to the inherent bias of CSMA against {\it downlink} traffic, from the access points to the stations: each access point attempts to access the radio channels with the same rate, independently of the number of active downlink flows at this access point. We prove that a slight modification of CSMA, which
consists  in running one instance of CSMA per flow at each access point, corrects this bias and makes the algorithm optimal. We refer to this algorithm, introduced in \cite{BF-10},   as \textit{flow-aware} CSMA.

The rest of the paper is organized as follows. We present some related work in the next section. The network model in ad-hoc mode is described in section \ref{sec:model}. Sections \ref{sec:packet-level} and \ref{sec:flow-level} are devoted to the packet- and flow-level dynamics, respectively, assuming time-scale separation. The main result of the paper, given in Theorem \ref{optim}, shows in particular the  optimality of CSMA in ad-hoc mode. The validity of the time-scale separation assumption is discussed in section \ref{sec:timescale}. The infrastructure mode is considered in section \ref{sec:downlink}, where we prove the suboptimality of standard CSMA and the optimality of flow-aware CSMA. Section \ref{sec:conc} concludes the paper.

\section{Related work}
The present work is related to the problem of optimal scheduling in wireless networks. While a centralized solution is known since the seminal work of 
Tassiulas and Ephremides, who proved  in  \cite{tassiulas-92} the optimality of  the \textit{maximum weight} policy, no distributed solution was known until the recent works of Jiang, Ni, Shah and Walrand   \cite{shah-09,walrand-08,raja-09}. These authors considered a simple CSMA algorithm whereby the    attempt rate  of each station depends either  on  the  number of queued packets or on 
some local estimates  of the arrival rate and the service rate of packets  at the station. Similar ideas are used by Ni, Tan and Srikant in   \cite{srikant-10}. 
The proof of optimality relies on the fact that these adaptive versions of CSMA achieve the
  maximum weight scheduling, under some technical assumptions related to the speed of convergence of the algorithm. In practice,
  the algorithm must indeed be carefully designed so as to enforce the time-scale separation, as shown   for instance  in the recent paper of  Prouti\`ere, Yi, Lan and Chiang   \cite{proutiere-10}.

All these papers focus on the packet-level dynamics, assuming packets are generated by some fixed number of flows. The flow-level dynamics are ignored, whereas they are 
 known to be critical, see for instance  \cite{Barakat03,ben01,BK00,MR00} in the context of wireline networks. As in our previous paper \cite{BF-10}, we consider both the packet- and flow-level dynamics, under the usual  assumption that the former are much faster than the latter. 
 Specifically, we extend the results 
 of \cite{BF-10} to multi-channel networks in both ad-hoc and infrastructure modes and discuss the validity of the time-scale separation assumption.
 
 Surprisingly, little attention  has so far been paid   to multi-channel networks. A notable exception is the adaptive, multi-channel version of CSMA introduced in  \cite{proutiere-10}, which is shown to maximize the network utility when combined with some appropriate virtual queue  mechanism. We here prove the optimality of CSMA in the sense of flow-level stability for a  very general model where the interference constraints may depend on the considered channel and each transmitter may only use a subset of the channels. 
 Specifically, we show that it is sufficient for  each transmitter to probe one of its channels at random, without any further information on the network state. 
  
 Another salient feature of this paper is the observation of the key difference between the ad-hoc and infrastructure modes.  In the former, the number of transmitters grows with the congestion, which increases the channel attempt rate and in turn  stabilizes the network. This is not the case of the latter since the channel access opportunities of each access point must be shared by all downlink flows at this access point. This inherent bias of CSMA against downlink traffic is well known, see e.g.~\cite{heusse-03,kim05}, and can be easily corrected by letting  the attempt rate of each access point depend on the number of downlink flows, a scheme we refer to as \textit{flow-aware} CSMA \cite{BF-10}. The algorithm is then optimal.

\section{Model}
\label{sec:model}

  \subsection{A multi-channel wireless network}

The network consists of a random, dynamic set of wireless links 
in ad-hoc mode (there is no access point at this stage). These links must share 
some finite number $J$ of non-interfering radio channels.   Each  link consists of a  transmitter-receiver pair; the
transmitter  is able to  use at most one  radio channel at a time. 
We group links into a finite number of  $K$ {\it classes}, as illustrated  by Figure \ref{fig:first-example}. All links within the same class have the same  radio conditions, the same interference constraints and the same CSMA parameters.
We denote by $x_k$ the number of class-$k$ links and by $x$ the corresponding vector, which we refer to as the \textit{network state}. Two links within the same class cannot be simultaneously active on the same channel.  An active  class-$k$ link on channel $j$ transmits data at the physical rate $\varphi_k$ bit/s, independently of $j$.
We say that class $k$ is active on channel $j$ if there is an active  class-$k$ link on channel $j$.  

\begin{figure}[h]
\centering
\begin{minipage}[c]{0.5\linewidth}
\includegraphics[width=\linewidth]{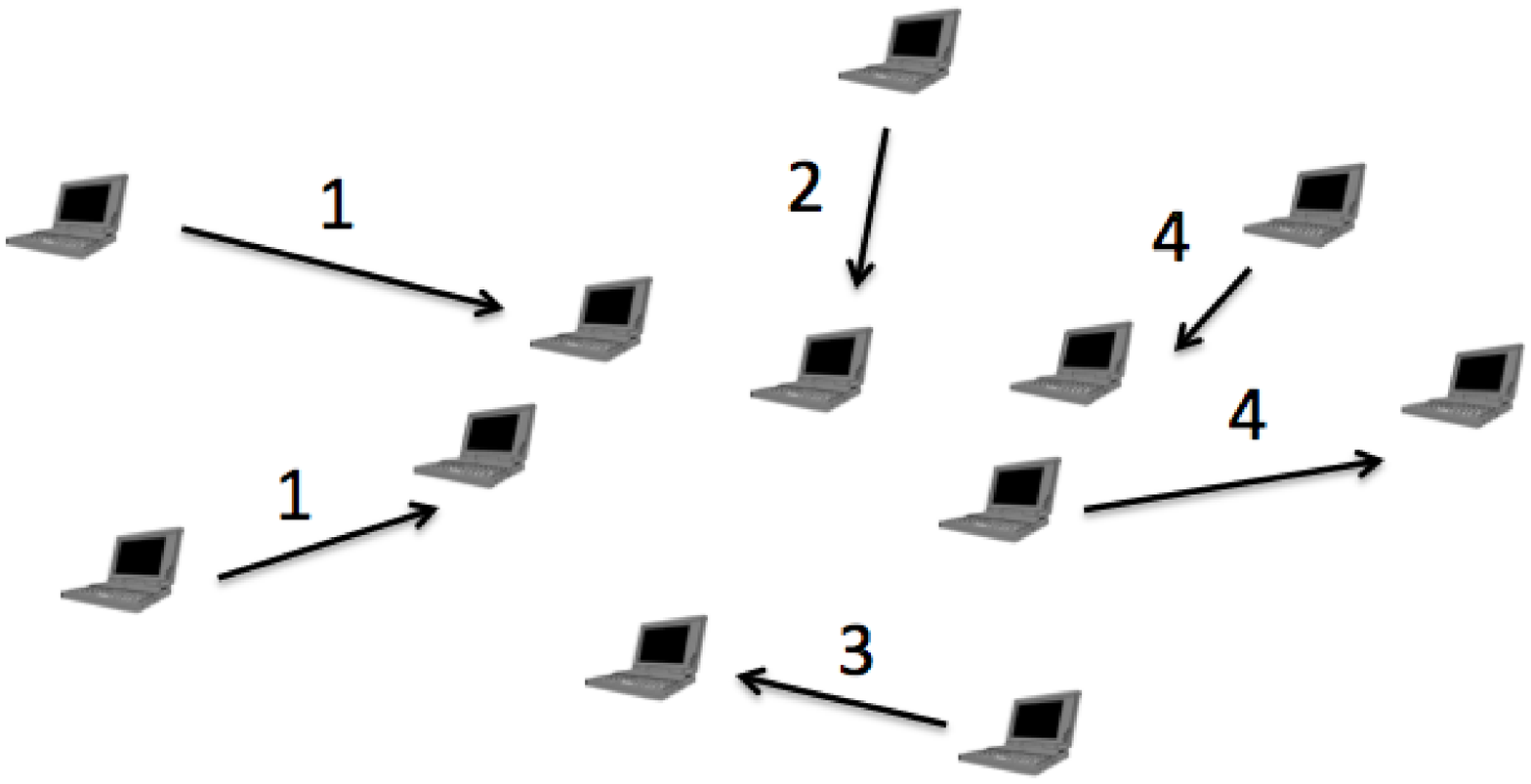}
\end{minipage}
\hspace{0.05\linewidth}
\begin{minipage}[c]{0.25\linewidth}
\includegraphics[width=\linewidth]{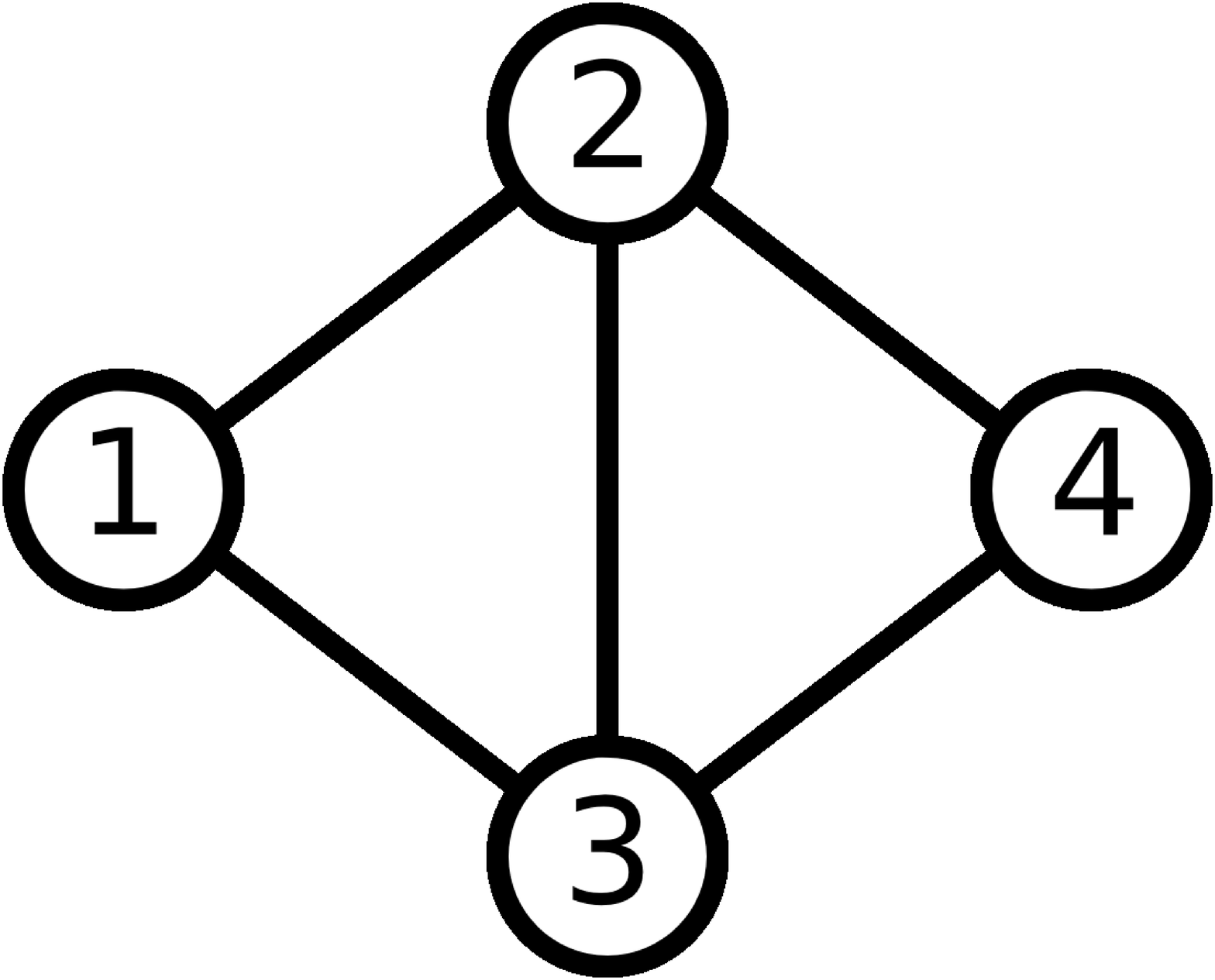}
\end{minipage}
\caption{An ad-hoc wireless network with 4 classes of links and its interference graph.}
\label{fig:first-example}
\end{figure}

Each channel $j$ is
associated with some \textit{conflict graph}  $G_j=(V_j,E_j)$, where $V_j\subset\{1,\ldots,K\}$ is the
set of classes that are able to transmit on channel $j$ and $E_j$ is
the set of edges, each representing a conflict. Specifically, two classes $k,l\in V_j$ can be
simultaneously active on channel $j$ if and only if they do not conflict with each other, that
is if $(k,l)\not \in E_j$. The $J$ conflict graphs are typically the same but
could differ due to different radio propagation environments on the $J$ channels,
 or to different transmission capabilities of the $K$ classes.

 \subsection{Feasible schedules}
\label{sec:fs}

We refer to a schedule as any vector $y\in \{0,1\}^{K\times J}$, where $y_{kj}=1$ if class $k$ is active on channel $j$. We denote by $y_k$ the number of active class-$k$ links:
$$
y_k= \sum_{j=1}^J y_{kj}.
$$
The schedule is \textit{feasible} if for all $j=1,\ldots,J$,  the active classes on channel $j$ belong to $V_j$ and do not conflict with each other, that is 
$y_{kj}y_{lj}=0$ for all $(k,l)\in E_j$. Moreover, we must have:
\begin{equation}\label{eq:yx}
\forall k=1,\ldots,K,\quad y_k\le x_k.
\end{equation}
We denote by  $\Y(x)$ the set of feasible schedules. Note that if $x_k\ge J$ for all $k=1,\ldots,K$, the constraint \eqref{eq:yx} is no longer limiting (since the number of active class-$k$ links is limited by the number of radio channels $J$) and the set of feasible schedules becomes independent of the network state. We denote by $\Y$ the corresponding set, which is the union of $\Y(x)$ over all network states $x$.

 \subsection{Capacity region}
\label{capa}
Assume that each feasible schedule $y$ is selected with probability $\pi(y)$, with $\sumy \pi(y)=1$. 
The mean throughput of class $k$  is then given by:
\begin{equation}\label{eq:mu}
\phi_k=\varphi_{k} \sum_{y\in \Y} y_k \pi(y).
\end{equation}
Let  $\phi$ be the corresponding throughput vector.
We refer to the \textit{capacity region} as the set of vectors $\phi$ generated by all probability measures $\pi(y)$, $y\in \Y$.
Note that the capacity region depends both on the physical rates and on the interference constraints of all wireless links.

\section{Packet-level dynamics}
\label{sec:packet-level}
We first analyze the packet-level dynamics induced by CSMA for a static network state $x$. The flow-level dynamics that make $x$ vary are introduced in section \ref{sec:flow-level}.

\subsection{Random access}
\label{sec:ra}

We consider the standard CSMA algorithm where each transmitter  waits for a period
of random duration referred to as the \textit{backoff time} before each transmission
attempt. At each attempt, the transmitter chooses a radio channel at random and probes it.
If the radio channel is sensed idle (in the sense that no conflicting link is active),
a packet  is transmitted (we neglect tho channel after some random backoff time and transmits a packet over this channel if it is sensed idle. We study the random variations oe collisions); otherwise, the transmitter waits for a new backoff time
before the next attempt. 

Packets have random sizes of unit mean   and are transmitted at the physical rate $\varphi_k$ on class-$k$ links; the backoff times
 of class-$k$ transmitters are random with mean $1/\nu_k$ , where $\nu_k>0$ is the corresponding attempt rate. We denote by $\alpha_k=\nu_k/\varphi_k$
the ratio  of the mean packet transmission time to the mean backoff time of  class-$k$ links. 
Channel $j$ is chosen with probability $\beta_{kj}$, with
$\sum_{j=1}^J \beta_{kj} = 1$ and $\beta_{kj} > 0$ if and only if $k\in V_j$, so that all accessible channels are attempted with positive probability.

\subsection{Stationary distribution}

Let $Y(t)$ be the schedule selected by the  above random access algorithm at time $t$. We look for the stationary distribution of $Y(t)$, which we denote by $\pi(x,y)$ to highlight the fact that it depends on the network state $x$. We have:

\begin{proposition}\label{prop1}
If both the packet sizes and the
backoff times  have exponential distributions, then $Y(t)$  is a reversible Markov process, with stationary measure:
\begin{equation}\label{eq:wix}
w(x,y)= \prod_{k:x_k>0} {x_k!\over (x_k-y_k)!}\alpha_k^{y_k}\prodj {\beta_{kj}^{y_{kj}}},\quad y\in \Y(x).
\end{equation}
\end{proposition}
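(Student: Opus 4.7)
The plan is to argue that $Y(t)$ is a continuous-time Markov chain on the finite state space $\Y(x)$ by the memoryless property of the exponential backoff times and packet durations, identify its transition rates, and then verify the reversibility equations for $w(x,y)$.

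First I would spell out the transitions out of a state $y \in \Y(x)$. There are only two kinds of elementary moves. An \emph{activation} move $y \to y + e_{kj}$ (where $e_{kj}$ denotes the unit matrix in position $(k,j)$) occurs when one of the $x_k - y_k$ idle class-$k$ transmitters finishes its backoff, selects channel $j$, and senses it idle; by independence and the choice probabilities this happens at rate $(x_k - y_k)\,\nu_k\,\beta_{kj}$, and only when $y + e_{kj}$ is itself feasible (otherwise the channel is not sensed idle and the backoff restarts, producing no state change). A \emph{deactivation} move $y \to y - e_{kj}$ occurs at rate $\varphi_k$ whenever $y_{kj} = 1$, corresponding to the completion of the packet currently transmitted by the unique active class-$k$ link on channel $j$. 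The Markov property follows from the memorylessness of the exponentials assumed in the statement.

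Next I would check the detailed balance equation
\[
w(x,y)\,(x_k - y_k)\,\nu_k\,\beta_{kj} \;=\; w(x,y+e_{kj})\,\varphi_k
\]
for every feasible pair $y, y+e_{kj}$. Using \eqref{eq:wix}, the ratio $w(x,y+e_{kj})/w(x,y)$ only changes the factor indexed by $k$: the combinatorial term becomes $x_k!/(x_k-y_k-1)!$ instead of $x_k!/(x_k-y_k)!$, the power of $\alpha_k$ gains one unit, and the product over channels gains a factor $\beta_{kj}$. Hence
\[
\frac{w(x,y+e_{kj})}{w(x,y)} \;=\; (x_k - y_k)\,\alpha_k\,\beta_{kj},
\]
and since $\alpha_k = \nu_k / \varphi_k$ the detailed balance identity is an equality. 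Because every transition of the chain is of the elementary activation/deactivation type just considered, detailed balance for these pairs suffices to give reversibility with respect to $w(x,\cdot)$, and hence $w(x,\cdot)$ is a stationary measure on $\Y(x)$.

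I do not expect any deep obstacle; the bookkeeping is the only thing to be careful about. The two points one has to check are (i) the activation rate $(x_k - y_k)\nu_k\beta_{kj}$ correctly reflects the fact that a transmitter that finds its chosen channel busy simply redraws a backoff and induces no state change, so only attempts landing on a feasible neighbour contribute, and (ii) the restriction $x_k > 0$ in the product of \eqref{eq:wix} is consistent, which it is because $y_k \le x_k$ forces $y_k = 0$ whenever $x_k = 0$, so the missing factor equals $1$. Summing $w(x,\cdot)$ over the finite set $\Y(x)$ then yields the stationary distribution $\pi(x,y)$ by normalisation.
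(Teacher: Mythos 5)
Your proof is correct and follows essentially the same route as the paper: identify the activation rate $(x_k-y_k)\nu_k\beta_{kj}$ and deactivation rate $\varphi_k$, then verify the local (detailed) balance equations, which hold because $w(x,y+e_{kj})/w(x,y)=(x_k-y_k)\alpha_k\beta_{kj}$ and $\alpha_k=\nu_k/\varphi_k$. The extra bookkeeping remarks (blocked attempts causing no transition, the $x_k=0$ case) are sound and merely make explicit what the paper leaves implicit.
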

\bp
Let $e_{kj}$ be the unit vector on component $k,j$ on  $\{0,1\}^{K\times J}$. 
The Markov process $Y(t)$ jumps from state $y$ to state $y+e_{kj}$ with rate $(x_k-y_k) \nu_k \beta_{kj}$ (since all idle links attempt to access the channel) and from  state $y+e_{kj}$ to state $y$ with rate $\varphi_k$ (since all class-$k$ links have the same physical rate $\varphi_k$, independently of the used channel), for any state $y$ such that $y+e_{kj}\in \Y(x)$.
The proof then follows from the local balance equations:
$$
w(x,y) (x_k-y_k)\nu_k\beta_{kj}=w(x,y+e_{kj}){\varphi_k}.
$$
\ep

The stationary distribution $\pi(x,y)$ follows from the normalization of  the stationary measure $w(x,y)$  over all $y\in \Y(x)$. We deduce the mean throughput of class $k$ in state $x$:
\begin{equation}\label{eq:phix}
\phi_k(x)=\varphi_k \sumy y_k \pi(x,y).
\end{equation}
It turns out that, by the insensitivity property of the underlying loss network
\cite{bonald-07}, these expressions are in fact valid for any \textit{phase-type} distributions of packet sizes and backoff times; such distributions are known to form a dense subset within
the set of all distributions with real, non-negative support \cite{serfozo}, so
that the results hold for
virtually any distributions of packet sizes and backoff times. We refer the reader to \cite{VBLP10} for further details on this insensitivity property.

\section{Flow-level dynamics}
\label{sec:flow-level}
We now introduce the flow-level dynamics under the assumption of 
infinitely fast packet-level dynamics; the validity of this 
time-scale separation assumption is discussed in  section \ref{sec:timescale}.

\subsection{Traffic characteristics}

We assume that flows using class-$k$ links are generated according to a Poisson process of intensity $\lambda_k$.  
  Each such flow has an exponential size with mean $\sigma_k$ bits and leaves the network once the corresponding data transfer is completed. 
There is a one-to-one correspondence between flows and links so that both terms are used
 interchangeably in the following. 
  We denote by
$\rho_k=\lambda_k\sigma_k$ the traffic intensity of class $k$ (in bit/s) and by
$\rho$ the corresponding vector. 
 
Under the time-scale separation assumption,  the 
 flow-level dynamics are much slower than the 
 packet-level dynamics   so that, at the  time scale of a flow,  everything happens as if the stationary distribution \eqref{eq:wix} of the packet-level dynamics were reached instantaneously.  
 In particular, the mean throughput of class $k$ is given by \eqref{eq:phix} in state $x$.

\subsection{Stability region}

 Let $X_k(t)$ be the number of class-$k$ flows at time $t$. The   corresponding vector $X(t)$ describes
 the evolution of the network state. 
This
  is a Markov process  with transition rates $\lambda_k$  from state $x$ to state $x+e_k$ and 
$\phi_k(x)/\sigma_k$ from state $x$ to state $x-e_k$ (provided $x_k>0$),
where $e_k$ denotes the unit vector on component $k$.
 
 We say that the network is stable if this Markov process is ergodic. 
Clearly, a necessary condition for stability is that the vector of traffic intensities $\rho$ lies in the capacity region. The following key result of the paper shows that this condition is 
in fact sufficient, up to the critical case where $\rho$ lies on the boundary of the capacity region. In this sense, CSMA is optimal in the considered ad-hoc mode.

\begin{theorem}\label{optim}
The network is stable for all vectors of traffic intensities $\rho$ in the interior of the capacity region.
\end{theorem}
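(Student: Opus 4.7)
The plan is to prove ergodicity via the Foster--Lyapunov drift criterion, exploiting the Gibbs structure of the CSMA stationary distribution given by Proposition~1. Since $\rho$ lies in the interior of the capacity region, I fix $\varepsilon>0$ and a probability distribution $\pi^*$ on $\Y$ such that $\phi_k^{*} := \varphi_k \sum_y y_k \pi^*(y) \ge (1+\varepsilon)\rho_k$ for every class $k$ with $\rho_k>0$ (classes with $\rho_k=0$ can be discarded since their coordinate is monotone non-increasing). The candidate Lyapunov function is
\[ V(x) = \sum_{k=1}^K \frac{\sigma_k}{\varphi_k}\,\Phi(x_k), \qquad \Phi(n) = n\log n - n \text{ for } n\ge 1,\ \Phi(0)=0. \]
The function $\Phi$ is chosen because $\Phi(n+1)-\Phi(n) = \log n + o(1)$, which will align the drift with the log-linear structure of $\log w(x,y)$.

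Next I would compute the generator $\mathcal{L}V$. Arrivals at rate $\lambda_k$ and departures at rate $\phi_k(x)/\sigma_k$ produce, after collecting the dominant $\log x_k$ terms and absorbing bounded arrival-at-$0$ contributions into a remainder $R(x)$ with $\sup_x |R(x)| < \infty$,
\[ \mathcal{L}V(x) = \sum_{k:\, x_k \ge 1} \frac{\log x_k}{\varphi_k}\bigl(\rho_k - \phi_k(x)\bigr) + R(x). \]
The critical input is the variational characterization of the Gibbs measure $\pi(x,\cdot)\propto w(x,\cdot)$: among probability measures $q$ on $\Y(x)$, $\pi(x,\cdot)$ maximizes $H(q) + \mathbb{E}_q[\log w(x,Y)]$. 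Using a Stirling-type expansion, $\log\bigl(x_k!/(x_k-y_k)!\bigr) = y_k \log x_k + O(1)$ uniformly on $\Y$ whenever $x_k \ge J$, so $\log w(x,y) = \sum_k y_k \log x_k + g(y) + O(1)$ with $g$ bounded on the finite set $\Y$. Comparing $\pi(x,\cdot)$ to any $q$ supported on $\Y(x)$ in the variational inequality yields
\[ \sum_k \log x_k\; \mathbb{E}_{\pi(x,\cdot)}[y_k] \ge \sum_k \log x_k\; \mathbb{E}_q[y_k] - C, \]
for a constant $C$ depending only on $\Y$, $\alpha_k$, $\beta_{kj}$. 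Taking $q=\pi^*$ (valid when $x_k \ge J$ for every $k$, so $\Y \subseteq \Y(x)$) and substituting gives
\[ \mathcal{L}V(x) \le -\sum_k \frac{\log x_k}{\varphi_k}(\phi_k^{*}-\rho_k) + C' \le -\varepsilon \sum_k \frac{\rho_k}{\varphi_k}\log x_k + C', \]
which diverges to $-\infty$ as $|x|\to\infty$ since $\max_k x_k \ge |x|/K$ and $\rho_k>0$ for every retained class.

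The main obstacle is the boundary case where some coordinate $x_k<J$ while $|x|$ is nevertheless large, so that $\pi^*$ is not supported on $\Y(x)$ and the variational inequality cannot be invoked directly. I would handle this by replacing $\pi^*$ with its normalized restriction to $\Y(x)$: since the excluded schedules are those with $y_k > x_k$ for some small $x_k$, the resulting perturbation of $\phi_k^{*}$ is bounded, and one can absorb it by taking $\varepsilon$ slightly smaller. Because the ``problematic'' coordinate sets form a finite collection indexed by the subset $\{k : x_k < J\}$, each case may be handled separately; along any sequence with $|x|\to\infty$ some $x_{k^{*}}$ (with $\rho_{k^{*}}>0$) must diverge, and the corresponding $\log x_{k^{*}}$ term in the drift bound ensures $\mathcal{L}V(x)\to-\infty$. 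Foster's criterion then yields positive recurrence and completes the proof.
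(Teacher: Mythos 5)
Your overall architecture matches the paper's: essentially the same Lyapunov function (the paper uses $\sum_{k:x_k>0}(x_k\sigma_k/\varphi_k)\log(x_k\alpha_k)$, which differs from your $\sum_k(\sigma_k/\varphi_k)(x_k\log x_k-x_k)$ by a linear term with bounded drift), the same decomposition of the drift into $\sum_{k:x_k>0}\frac{\log x_k}{\varphi_k}(\rho_k-\phi_k(x))$ plus a bounded remainder, and Foster's criterion. Where you genuinely diverge is the key estimate showing that $\pi(x,\cdot)$ nearly maximizes $\sum_k y_k\log x_k$ over $\Y(x)$: you invoke the Gibbs variational principle $\mathbb{E}_{\pi(x,\cdot)}[\log w]+H(\pi(x,\cdot))\ge \mathbb{E}_q[\log w]+H(q)$, which with $H\le\log|\Y|$ yields an \emph{additive} error; the paper instead proves a concentration lemma (its Lemma~1, a multiplicative $(1-\epsilon)$ bound on $\sum_y\pi(x,y)\log u(x,y)$ against $\log\max_{y\in\Y(x)}u(x,y)$) together with a separate lemma comparing the maximum over $\Y$ with the maximum over $\Y(x)$. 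Your free-energy route is clean and fully correct on the set $\{x:\ x_k\ge J\ \forall k\}$.

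The gap is in your treatment of the boundary states where $\Y\not\subseteq\Y(x)$. Replacing $\pi^*$ by its \emph{normalized restriction} to $\Y(x)$ does not give a ``bounded perturbation of $\phi_k^*$'': conditioning on $\{y\in\Y(x)\}$ can drastically reduce $\mathbb{E}_q[y_k]$ for a class $k$ whose $x_k$ is large. For instance, with two non-conflicting classes and $J=1$, take $\pi^*$ with mass $0.9$ on $(1,1)$ and $0.05$ on each of $(0,1)$ and $(0,0)$, and a state with $x_1=0$, $x_2$ large: the restriction to $\Y(x)=\{(0,0),(0,1)\}$ gives $\mathbb{E}_q[y_2]=1/2$ instead of $0.95$, so if $\rho_2/\varphi_2\in(1/2,\,0.95/(1+\varepsilon))$ your drift bound acquires the term $+\frac{\log x_2}{\varphi_2}(\rho_2-\varphi_2/2)\to+\infty$ and certifies nothing. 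Likewise, ``some $x_{k^*}$ diverges'' does not by itself make the drift negative; you need $\phi^q_{k^*}>\rho_{k^*}$ for that diverging class. The correct construction --- which is precisely what the paper's Lemma~2 encodes --- is to push $\pi^*$ forward under the truncation map $y\mapsto T_x(y)$ with $(T_xy)_k=\min(y_k,x_k)$ (switching off excess active links preserves feasibility, so $T_x(y)\in\Y(x)$). Then $\mathbb{E}_q[y_k]=\mathbb{E}_{\pi^*}[y_k]$ for every $k$ with $x_k\ge J$, and $\sum_{k:x_k>0}y_k\log(x_k\alpha_k)$ changes only by a bounded amount because truncation affects only classes with $0<x_k<J$, for which $|\log(x_k\alpha_k)|\le\max_l|\log\alpha_l|+\log J$ and $y_k-(T_xy)_k\le J$. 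With that substitution in place of the normalized restriction, your argument closes.
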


The proof is deferred to the appendix. It is based on the fact that the 
random access algorithm selects schedules in proportion to their weights (\ref{eq:wix}). For large $x$, this is equivalent to selecting schedules in proportion to the following uniform weight, which is independent of the channel probing distribution:
\begin{equation}\label{eq:weight}
u(x,y)=\prod_{k: x_k>0} (x_k{\alpha_k})^{y_k}, \quad y\in \Y(x).
\end{equation}
Defining:
$$
u(x)=\max_{y\in \Y(x)}u(x,y),
$$
the following result, also proved in the appendix, shows that  those schedules of maximum weight are actually selected with probability close to 1:
\begin{lemma}
For any $\epsilon>0$, we have:
$$
\sum_{y\in \Y(x)} \pi(x,y)\log(u(x,y))\ge (1-\epsilon) \log(u(x))
$$
for all states $x$ but some finite number.
\end{lemma}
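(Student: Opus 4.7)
The plan is to show that $\pi(x,\cdot)$ concentrates on schedules with $u(x,y)$ close to $u(x)$, provided $u(x)$ is large; since $u(x)\ge\max_k x_k\alpha_k$, the set $\{x:u(x)\le M\}$ is finite for every $M$, so it suffices to prove the inequality for all $x$ with $u(x)$ larger than a threshold depending on $\epsilon$.

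The key step is a uniform upper bound $\pi(x,y)\le u(x,y)/(C_0\,u(x))$ with $C_0>0$ independent of $x$. Write $w(x,y)=u(x,y)\,b(x,y)$ where
$$
b(x,y):=\prod_{k,j}\beta_{kj}^{y_{kj}}\prod_{k:x_k>0}\prod_{i=0}^{y_k-1}\!\left(1-\frac{i}{x_k}\right);
$$
trivially $b\le 1$, and the feasibility inequality $x_k\ge y_k$ combined with $y_k\le J$ and the monotonicity in $x_k$ of each factor gives
$$
\prod_{i=0}^{y_k-1}\!\left(1-\frac{i}{x_k}\right)\ge\prod_{i=0}^{y_k-1}\!\left(1-\frac{i}{y_k}\right)=\frac{y_k!}{y_k^{y_k}}\ge\frac{J!}{J^J},
$$
whence $b(x,y)\ge C_0:=\beta_{\min}^{KJ}(J!/J^J)^K$, with $\beta_{\min}:=\min\{\beta_{kj}:k\in V_j\}>0$. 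Since $\sum_{y'\in\Y(x)}w(x,y')\ge w(x,y^*)\ge C_0\,u(x)$ for any maximizer $y^*$ of $u(x,\cdot)$ and $w(x,y)\le u(x,y)$, the pointwise estimate follows.

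Fix $\eta=\epsilon/2$ and set $P(x):=\sum_{y\,:\,u(x,y)\le u(x)^{1-\eta}}\pi(x,y)$. Summing the pointwise estimate over this ``bad'' set gives $P(x)\le|\Y|\,u(x)^{-\eta}/C_0$, which tends to $0$ as $u(x)\to\infty$. On the complementary ``good'' set one has $\log u(x,y)>(1-\eta)\log u(x)$, while on all of $\Y(x)$ the crude bound $\log u(x,y)=\sum_k y_k\log(x_k\alpha_k)\ge-C_1$, with $C_1:=J\sum_k\max(0,-\log\alpha_k)$, holds because $x_k\ge 1$ whenever $y_k\ge 1$. Combining,
$$
\sum_{y\in\Y(x)}\pi(x,y)\log u(x,y)\ge(1-\eta)\bigl(1-P(x)\bigr)\log u(x)-C_1\,P(x),
$$
and taking $u(x)$ large enough in terms of $\epsilon$ forces the right-hand side to exceed $(1-\epsilon)\log u(x)$.

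The main obstacle is the uniform lower bound $b(x,y)\ge C_0>0$: a naive asymptotic $b(x,y)\to\beta(y)$ only applies as all $x_k$ with $y_k>0$ tend to infinity and is inadequate for states with some $x_k$ bounded, which cannot be pushed into a finite exception set. The pointwise comparison with the degenerate case $x_k=y_k$ in the display above is precisely what delivers a constant independent of $x$.
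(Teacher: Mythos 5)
Your proof is correct and follows essentially the same route as the paper's: both hinge on a uniform two-sided comparison $m\,u(x,y)\le w(x,y)\le u(x,y)$ (your $C_0$ playing the role of the paper's $m$, obtained by comparing $\prod_i(1-i/x_k)$ with the worst case $x_k=y_k\le J$ rather than the paper's case split on $x_k\lessgtr 2J$), followed by showing that $\pi(x,\cdot)$ puts mass at most $O(u(x)^{-\eta})$ on schedules with $u(x,y)\le u(x)^{1-\eta}$. Your explicit lower bound $\log u(x,y)\ge -C_1$ on the bad set is a welcome extra precaution that the paper glosses over (its final inequality silently discards the possibly negative contribution of the non-near-maximal schedules), but it does not change the substance of the argument.
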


The result then follows from the stable behavior of maximum weight scheduling, except that the latter is defined over the set of {\it all} feasible schedules. Defining the corresponding weight by:
$$
v(x)=\max_{y\in \Y}u(x,y),
$$
the following result, proved in the appendix, shows that  it  is essentially the same as $u(x)$:

\begin{lemma}
We have:
  $$\sup_{x\in \X}{v(x)\over  u(x)}<\infty.$$
\end{lemma}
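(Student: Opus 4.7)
The plan is to compare a maximizer of $u(x,\cdot)$ over $\Y$ against an associated schedule in $\Y(x)$ obtained by ``truncating'' any overshoot in the number of active links of each class. The key observation is that the only difference between $\Y$ and $\Y(x)$ is the constraint $y_k \le x_k$ from \eqref{eq:yx}, and since any schedule satisfies $y_k \le J$ automatically, this constraint only matters for classes with $x_k < J$. Thus the truncation touches at most a bounded number of slots per class, and the resulting multiplicative loss is uniformly bounded.

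Concretely, let $y^\star \in \Y$ achieve $v(x) = u(x,y^\star)$. Construct $\tilde y$ from $y^\star$ as follows: for each class $k$ with $y^\star_k > x_k$, arbitrarily pick $y^\star_k - x_k$ channels on which class $k$ was active in $y^\star$ and zero out the corresponding entries $\tilde y_{kj}$; for classes with $y^\star_k \le x_k$, keep $\tilde y_{kj} = y^\star_{kj}$. Removing ones from a feasible schedule cannot create a conflict, cannot move a class outside its set $V_j$, and enforces $\tilde y_k \le x_k$ by construction, so $\tilde y \in \Y(x)$.

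It remains to bound the weight ratio. Since $u(x,y)$ only involves indices $k$ with $x_k > 0$, classes with $x_k = 0$ contribute no factor at all, and we may focus on classes with $0 < x_k < y^\star_k$. For any such $k$, the inequality $x_k < y^\star_k \le J$ forces $1 \le x_k \le J-1$ and $1 \le y^\star_k - x_k \le J-1$. Therefore
\begin{equation*}
\frac{u(x,\tilde y)}{u(x,y^\star)} \;=\; \prod_{k \,:\, 0 < x_k < y^\star_k} (x_k \alpha_k)^{x_k - y^\star_k} \;\ge\; \prod_{k=1}^K \frac{1}{\max\bigl(1,\,((J-1)\alpha_k)^{J-1}\bigr)} \;=:\; \frac{1}{C},
\end{equation*}
where the bound per class comes from splitting on whether $x_k\alpha_k \ge 1$ (giving a factor at least $((J-1)\alpha_k)^{-(J-1)}$) or $x_k\alpha_k < 1$ (giving a factor at least $1$). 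The constant $C$ depends only on $J$ and the $\alpha_k$'s, not on $x$. Since $\tilde y \in \Y(x)$, we conclude $u(x) \ge u(x,\tilde y) \ge v(x)/C$, proving $\sup_x v(x)/u(x) \le C < \infty$.

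The main obstacle I anticipate is purely bookkeeping: one must be careful that the ``overshoot'' $y^\star_k - x_k$ is bounded by a constant independent of $x$, which requires the observation $y^\star_k \le J$ combined with the fact that overshoot can only occur when $x_k < J$. Once that is in place the construction of $\tilde y$ and the factor-by-factor bound are routine, and the restriction of the product in \eqref{eq:weight} to indices with $x_k > 0$ is exactly what makes the classes with $x_k = 0$ harmless.
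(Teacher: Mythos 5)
Your proof is correct and rests on the same key observation as the paper's: the constraint $y_k\le x_k$ can only bite for classes with $0<x_k<J$, and each such class contributes a multiplicative factor bounded in terms of $J$ and $\alpha_k$ alone. The paper packages this by comparing $u(x,\cdot)$ to an auxiliary weight restricted to classes with $x_k\ge J$, whose maxima over $\Y$ and $\Y(x)$ coincide, whereas you truncate a maximizing schedule directly; the two arguments are essentially the same.
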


The proof of Theorem 1, based on Lemmas 1 and 2, then follows from Foster's criterion.

\section{Time-scale separation}
\label{sec:timescale}
Theorem \ref{optim} is based on the time-scale separation assumption: 
in the 
packet-level model of
 section \ref{sec:packet-level}, packets   ``see''  a fixed number of flows, while in the
flow-level model of section
\ref{sec:flow-level}, 
flows ``see''  the equilibrium  state of packet-level dynamics. In this section, we remove this assumption. Specifically, we prove that when the size of the flows
grows, the model without time-scale separation  converges  to the  model with time-scale separation, which indeed  suggests that CSMA is optimal for sufficiently large flow sizes. We actually conjecture that CSMA is optimal for any flow size, which we prove at the end of the section for  
 a specific class of networks. 

\subsection{Scaling}

As in section \ref{sec:flow-level}, class$-k$ flows are assumed to arrive according to a Poisson
process of intensity $\lambda_k$. The number of packets per class-$k$ flow has a geometric distribution with mean $N\sigma_k$, where $N$ is some positive integer, we refer to as the   scaling parameter.  In particular, each class-$k$ flow terminates with probability $1/(\sigma_k N)$ after each packet transmission. 
Packets are assumed to have an exponential
size with mean $1/N$ bits, so as to keep  the class-$k$ mean flow size constant and equal to $\sigma_k$ bits. In particular,  the corresponding traffic intensity
$\rho_k=\lambda_k\sigma_k$ is independent of $N$. 

The random access algorithm is that described in section \ref{sec:ra}. The only difference is that the attempt rates must be scaled so as to keep the 
 ratio of mean packet transmission time to mean backoff time constant. Thus each class-$k$ link now attempts to access the channels at rate $N\nu_k$.

\subsection{Asymptotic time-scale separation}

The state of the network  is now described  by the couple 
 $(X^N(t),Y^N(t))$, where $X^N(t)$ gives the number of flows of each class at time $t$ and $Y^N(t)$ the schedule that is selected at time $t$. This is a  Markov process with transition rates $\lambda_k$ from 
  state $(x,y)$ to state $(x+e_k,y)$ (class-$k$ flow arrival), $N(x_k-y_k)\nu_k \beta_{kj}$ from state $(x,y)$ to state $(x,y+e_{kj})$ (access to channel $j$ by a class-$k$ flow), $Ny_{kj} \varphi_k(1 - {1}/({\sigma_kN}))$ from  state $(x,y)$ to state $(x,y-e_{kj})$ (packet transmission of a class-$k$ flow over channel $j$, without flow completion), $y_{kj} {\varphi_k}/{\sigma_k}$ from  state $(x,y)$ to state $(x-e_k,y-e_{kj})$ (packet transmission of a class-$k$ flow over channel $j$, with flow completion).
 
When $N$ grows, the packet-level dynamics, represented by  $Y^N(t)$,
are accelerated with respect  to the flow-level dynamics, represented by  $X^N(t)$. The following result, proved in  the appendix, shows that there is indeed time-scale separation between the packet level and the flow level in the limit. We assume that $X^N(0) = X(0)$ for all $N\ge 1$.

\begin{theorem}
When $N\to\infty$, the stochastic process $X^N(t)$ converges in distribution to the Markov process $X(t)$, which describes the network state under the time-scale
separation assumption.
\label{theo:time-scale}
\end{theorem}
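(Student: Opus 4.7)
The plan is to view $(X^N(t),Y^N(t))$ as a two-time-scale Markov process and apply a stochastic averaging principle in the spirit of Kurtz. Its infinitesimal generator decomposes as $\mathcal{A}^N = \mathcal{B} + N\mathcal{L}$, where $\mathcal{B}$ collects the slow flow-level transitions (arrivals at rate $\lambda_k$ and flow completions at rate $y_{kj}\varphi_k/\sigma_k$), and $\mathcal{L}$, for each frozen $x$, is exactly the generator of the packet-level CSMA dynamics on $\Y(x)$ of Section~\ref{sec:packet-level}. By Proposition~\ref{prop1}, the frozen chain is irreducible and reversible with unique invariant measure $\pi(x,\cdot)$. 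On a function $f$ depending only on $x$, $\mathcal{L}f\equiv 0$, while the averaged slow generator $\overline{\mathcal{B}}f(x)=\sum_{y\in\Y(x)}\pi(x,y)\mathcal{B}f(x,y)$ reduces, using \eqref{eq:phix}, to $\sum_k \lambda_k[f(x+e_k)-f(x)] + \sum_k (\phi_k(x)/\sigma_k)[f(x-e_k)-f(x)]$, which is precisely the generator of $X(t)$ defined in Section~\ref{sec:flow-level}.

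To turn this into a proof I would proceed in three steps. First, I would establish tightness of $\{X^N\}_{N\ge 1}$ in $D([0,T],\N^K)$: since arrivals form independent Poisson processes of intensity $\lambda_k$ that do not depend on $N$, $X^N_k(t)$ is dominated by a pure-birth process, which yields $\sup_N \mathbb{E}[\sup_{t\le T} X^N_k(t)]<\infty$, and Aldous' criterion for unit-sized jumps then gives tightness. Second, I would introduce the random occupation measures $\Gamma^N(\diff s,\diff y)=\delta_{Y^N(s)}(\diff y)\,\diff s$ on $[0,T]\times\Y$, show that $(X^N,\Gamma^N)$ is tight, and prove that any subsequential limit $(X,\Gamma)$ disintegrates as $\Gamma(\diff s,\diff y)=\pi(X(s),\diff y)\,\diff s$. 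Third, I would pass to the limit in the martingale problem for $(X^N,Y^N)$: for every bounded $f:\N^K\to\R$,
$$
M^N_f(t)=f(X^N(t))-f(X^N(0))-\int_0^t \mathcal{B}f(X^N(s),Y^N(s))\,\diff s
$$
is a martingale (the fast part $N\mathcal{L}$ annihilates $f$). Writing its compensator as $\int_{[0,t]\times\Y}\mathcal{B}f(X^N(s),y)\,\Gamma^N(\diff s,\diff y)$ and passing to the limit identifies any limit point of $X^N$ as a solution of the martingale problem associated with $X(t)$; well-posedness of the latter (countable state space, locally bounded jump rates) then yields convergence of the full sequence by a standard sub-subsequence argument.

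The hard part will be the second step, that is, the identification of the limiting occupation measure. Two coupled difficulties arise. The state space $\Y(x)$ of the fast chain depends on the slow state, so at each jump of $X^N$ the feasibility constraint changes; however, $Y^N$ makes $O(N)$ transitions between two consecutive jumps of $X^N$ with high probability, so the fraction of time spent far from the frozen equilibrium is $O(1/N)$. More delicate is the need for a uniform control of the mixing time of the frozen chain toward $\pi(x,\cdot)$, so that the averaging is uniform in $x$; once $x_k\ge J$ for every $k$ the feasible set coincides with the fixed finite set $\Y$, and since the chain is reversible with the explicit weights \eqref{eq:wix}, a spectral-gap estimate uniform over $x$ provides the mixing bound needed to equate $\Gamma^N$ with $\pi(X^N(s),\diff y)\,\diff s$ in the limit. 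The residual states with $x_k<J$ for some $k$ form a set of size at most $O(\max_k X^N_k(T))^{K-1}\cdot J$, which is controlled by the tightness obtained in the first step.
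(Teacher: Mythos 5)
Your overall architecture---tightness of $X^N$, occupation measures $\Gamma^N$ for the fast variable, identification of the limiting disintegration as $\pi(X(s),\cdot)\,\diff s$, and conclusion via the martingale problem with a uniqueness/sub-subsequence argument---is exactly the paper's (the paper also uses the generator decomposition implicitly, dominating $X^N_k$ by a Poisson process for tightness and invoking Kurtz's lemma to disintegrate $\Gamma$). The genuine divergence is in the step you yourself flag as the hard one. The paper identifies $\vartheta(s,\cdot)$ \emph{without any mixing-time estimate}: it applies the martingale decomposition to functions $g$ of the fast variable $Y^N$, scaled by $1/N$; the quadratic variation of $\bar M^N_g$ is $O(1/N)$, so in the limit $\int_0^t\sum_y \mathcal{L}_{X(s)}g(y)\,\vartheta(s,y)\,\diff s=0$ for all $g$, forcing $\vartheta(s,\cdot)$ to be invariant for the frozen generator and hence, by irreducibility and Proposition~\ref{prop1}, equal to $\pi(X(s),\cdot)$. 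This is the standard Kurtz perturbed-test-function trick, and it is strictly cleaner than what you propose.

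Your alternative---a spectral-gap bound for the frozen chain \emph{uniform over all} $x$---is problematic as stated: the frozen dynamics is a Glauber-type dynamics on independent sets of the conflict graph with fugacities $x_k\alpha_k$, and for general graphs such dynamics exhibits torpid mixing as the fugacities grow (the invariant measure concentrates on distinct maximum schedules separated by low-probability bottlenecks), so the gap is \emph{not} bounded below uniformly in $x\in\N^K$. The plan is nevertheless repairable without new ideas: on $[0,T]$ tightness gives compact containment of $X^N$ with probability $1-\epsilon$, the relevant set of frozen states $x$ is then finite, and the minimum of finitely many positive spectral gaps is positive; combined with the $O(N)$ fast transitions between slow jumps this yields the averaging you need. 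So your route can be made to work, at the cost of a quantitative mixing argument that the occupation-measure/martingale identification renders unnecessary. One further small point: your tightness step should also record a lower bound (the paper dominates $X^N_k$ from below by an $M/M/1$ queue) or otherwise verify the Arzel\`a--Ascoli/Aldous modulus condition; domination from above alone controls $\sup_t X^N_k(t)$ but the oscillation control still needs the locally bounded jump rates you mention.
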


\subsection{Stability of some class of networks}

Theorems \ref{optim} and \ref{theo:time-scale} suggest that CSMA is optimal for sufficiently large flow sizes. We conjecture that CSMA is actually optimal for any flow size, in the sense that the Markov process  $(X^N(t),Y^N(t))$ is ergodic for any scaling parameter $N\ge 1$ provided the 
vector of
traffic intensities $\rho$ lies in the interior of the capacity region.
To support this conjecture, consider the following class of networks. We assume that all links have access to the $J$ channels.
The interference graph is the same
on all channels and given by some
$L$-partite
graph, i.e. there exists a partition $\{C_1, \dots,C_L\}$ of $\{1,\dots,K\}$ such that
two classes in $C_l$ do not interfere with each other but a class in $C_l$ does interfere with all classes 
in $\{1,\ldots,K\}\setminus C_l$. Examples of $L$-partite graphs are given in figure \ref{fig:partite}.
The following result, proved in the appendix, shows that CSMA is optimal independently of the scaling parameter $N$:

\begin{proposition}\label{bipart}
Any network with a $L$-partite interference graph is stable for all vectors of
traffic intensities $\rho$ in the interior of the capacity region.
\end{proposition}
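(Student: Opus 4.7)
The plan is to apply Foster's criterion to the joint Markov chain $(X^N,Y^N)$, using an augmented Lyapunov function that compensates for the lack of time-scale separation through a Poisson-type correction, with the $L$-partite assumption supplying the required uniform mixing estimate.

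I would first identify the capacity region. Because any two classes from different groups pairwise conflict on every channel, at most one group $C_l$ is active on any channel at any time; classes within a single group can coexist, subject to the intra-class one-link-per-channel constraint. Writing $\tau_{l,j}$ for the long-run fraction of time group $l$ is active on channel $j$, feasibility requires $\sum_l \tau_{l,j}\le 1$, and the largest achievable throughput for $k\in C_l$ equals $\varphi_k\sum_j\tau_{l,j}$. Eliminating $\tau$ gives the explicit form
\[
\mathring{\mathcal C} \;=\; \Bigl\{\rho\in\R_+^K :\ \sum_{l=1}^L \max_{k\in C_l} \tfrac{\rho_k}{\varphi_k} < J\Bigr\}
\]
for the interior.

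For the Lyapunov analysis, fix $\rho\in\mathring{\mathcal C}$ and take $V(x)=\sum_k x_k^2/\varphi_k^2$. Since $V$ depends only on $x$, the pure $Y^N$-transitions leave it unchanged, so the drift under the full generator reduces to
\[
\mathcal LV(x,y) \;=\; 2\sum_k \tfrac{x_k}{\sigma_k\varphi_k^2}\bigl(\rho_k - y_{k,\cdot}\varphi_k\bigr) + O(1).
\]
The instantaneous service term $y_{k,\cdot}\varphi_k$ is not the stationary throughput $\overline\phi_k(x)=\varphi_k\sum_y y_{k,\cdot}\pi(x,y)$ that Proposition 1 and Theorem 1 allow one to control. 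To bridge this, I would add a Poisson correction $R(x,y)$ solving $\mathcal L_y R(x,\cdot)(y)=2\sum_k \tfrac{x_k}{\sigma_k\varphi_k^2}\bigl(y_{k,\cdot}\varphi_k-\overline\phi_k(x)\bigr)$, where $\mathcal L_y$ is the generator of $Y^N$ at fixed $x$. The augmented Lyapunov $\widetilde V=V+R$ then has drift $2\sum_k \tfrac{x_k}{\sigma_k\varphi_k^2}(\rho_k-\overline\phi_k(x))+O(1)$, which by Lemmas 1 and 2 is bounded above by a negative constant outside a compact set, so Foster's criterion yields ergodicity of $(X^N,Y^N)$.

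The main obstacle is showing that $R$ exists and is bounded uniformly in $x$; equivalently, the $x$-conditional chain $Y^N$ must admit a uniform-in-$x$ spectral gap. The $L$-partite assumption is exactly what makes this tractable: for $x_k\ge J$ the stationary measure $\pi(x,\cdot)$ factorises across the $J$ channels, and on each channel the hard-core measure on the $L$-partite conflict graph decomposes as a mixture of independent-set distributions over the $L$ groups, yielding rapid mixing uniformly in the weights $x_k\alpha_k/J$. For non-$L$-partite interference graphs the corresponding hard-core dynamics may mix exponentially slowly at high activity, which is precisely the obstruction that keeps the broader conjecture open.
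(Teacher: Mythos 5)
Your identification of the capacity region ($\sum_{l}\max_{k\in C_l}\rho_k/\varphi_k<J$) agrees with the paper, but the Foster-plus-Poisson-correction scheme has two steps that fail. First, the claim that the $x$-conditional schedule chain has a spectral gap uniform in $x$ because the $L$-partite measure ``decomposes as a mixture over the $L$ groups'' is backwards: the complete multipartite conflict graph is the canonical example of \emph{torpid} mixing for hard-core dynamics at high fugacity. To move from ``group $l$ holds the channel'' to ``group $l'$ holds it'', every active link of $C_l$ must simultaneously release, an event of probability roughly $\prod_{k\in C_l}(1+x_k\alpha_k)^{-1}$, so the relaxation time grows polynomially in $|x|$ with degree $|C_l|$. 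The solution $R$ of your Poisson equation, with source of size $O(|x|)$, is then of order $|x|^{1+|C_l|}$ and is not $O(1)$; for $|C_l|\ge 2$ it is not even dominated by your quadratic $V$. Second, even granting time-scale separation, the drift $\sum_k \frac{x_k}{\sigma_k\varphi_k^2}(\rho_k-\overline\phi_k(x))$ is not controlled by Lemmas~1 and~2: those lemmas say CSMA approximately maximizes $\sum_k y_k\log(x_k\alpha_k)$ (max-\emph{log}-weight), which pairs with the entropy Lyapunov function $F$ of Theorem~1, not with a quadratic one. Max-log-weight can award the channels to a group other than the one maximizing $\sum_{k\in C_l}x_k/(\sigma_k\varphi_k)$ (e.g.\ one huge queue versus two moderate ones), in which state your quadratic drift can be positive.

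The paper's proof avoids both obstructions by passing to fluid limits of the unscaled chain $(X^N,Y^N)$ and tracking the workload $W(t)=\sum_{l=1}^L\max_{k\in C_l}\bigl(\bar X_k(t)\sigma_k/\varphi_k\bigr)$. The $L$-partite structure is used only through the observation that whenever some class of $C_l$ holds channel $j$, \emph{every} backlogged class of $C_l$ can and does transmit on $j$ while no class outside $C_l$ can; hence each busy channel reduces the group-maximum workload of whichever group holds it at unit rate, and $W'(t)\le \sum_l\max_{k\in C_l}\rho_k/\varphi_k-J<0$ regardless of which group wins which channel and regardless of how slowly the allocation switches. This makes the argument insensitive to both the identity of the selected schedule and the mixing time of the schedule dynamics, which is precisely where your approach breaks. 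If you want to salvage a Foster-type argument, the natural candidate is a piecewise-linear Lyapunov function built from $W$ itself rather than a quadratic one.
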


\begin{figure}[h!]
\centering
\begin{minipage}[b]{0.25\linewidth}
\centering
\includegraphics[width=\linewidth]{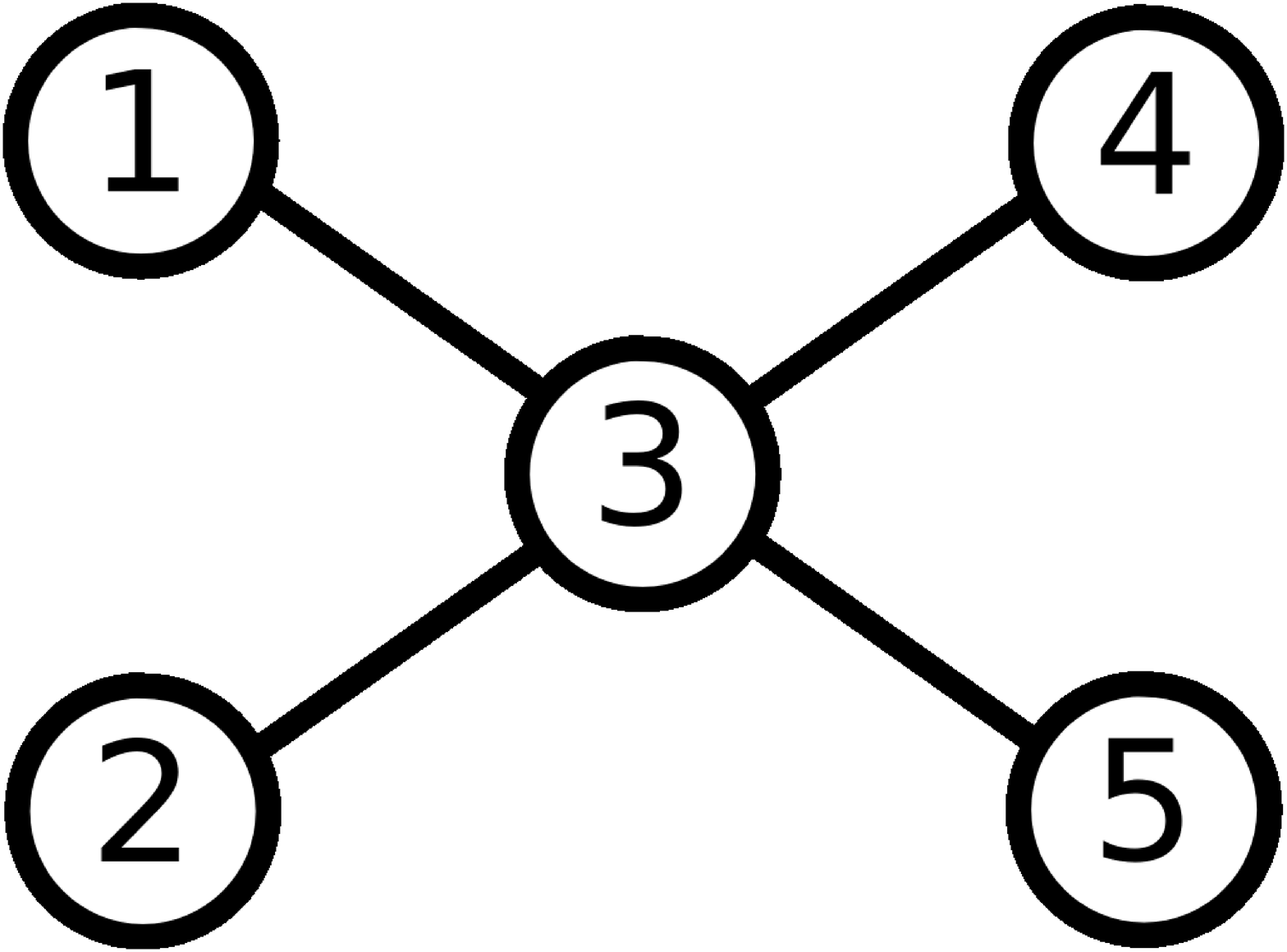}
\vspace{0.1cm}
\begin{align*}
C_1&=\{3\},\\ C_2&=\{1,2,4,5\}.
\end{align*}
(a)
\end{minipage}
\hspace{0.05\linewidth}
\begin{minipage}[b]{0.25\linewidth}
\centering
\includegraphics[width=0.84\linewidth]{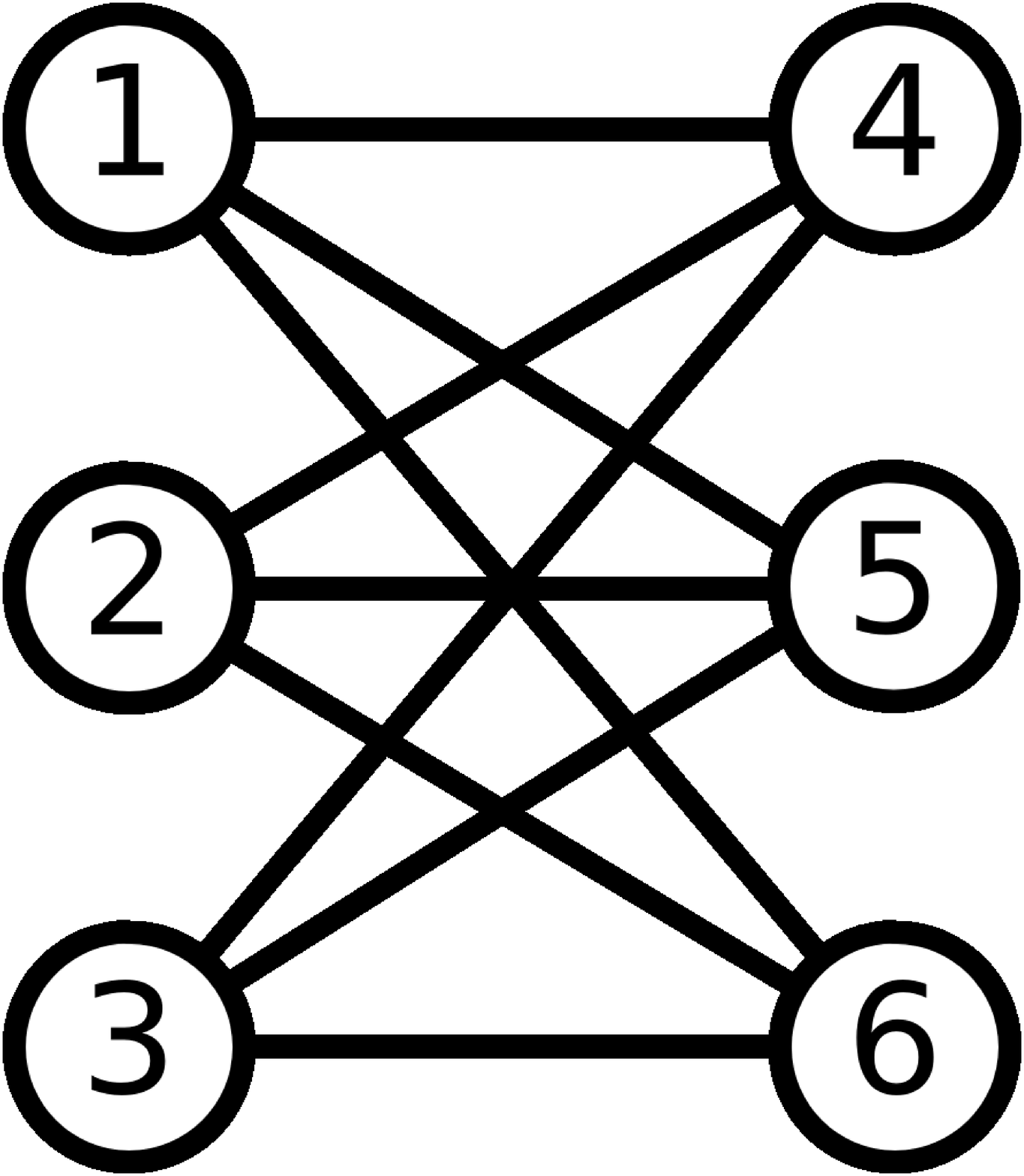}
\vspace{0.4cm}
\begin{align*}
C_1&=\{1,2,3\},\\ C_2&=\{4,5,6\}.
\end{align*}
(b)
\end{minipage}
\hspace{0.05\linewidth}
\begin{minipage}[b]{0.25\linewidth}
\centering
\includegraphics[width=0.81\linewidth]{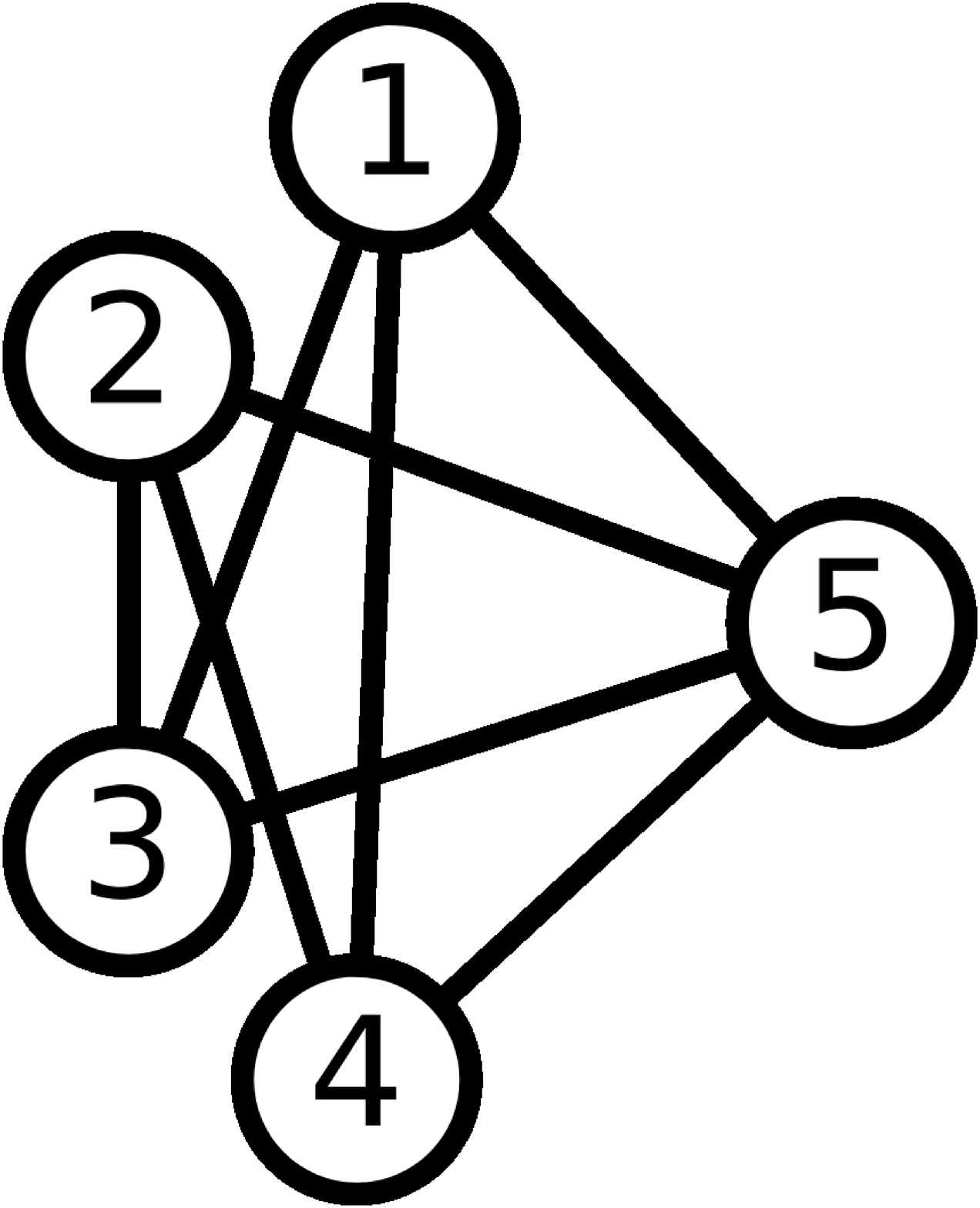}
\begin{align*}
C_1&=\{1,2\},\\ C_2&=\{3,4\},\\ C_3&=\{5\}.
\end{align*}
(c)
\end{minipage}
\caption{Examples of 2-partite (a)-(b) and 3-partite  (c) graphs.}
\label{fig:partite}
\end{figure}

\section{Infrastructure-based networks}
\label{sec:downlink}
We have so far considered a network in ad-hoc mode, without infrastructure. We
now consider $N$ access points to which users must connect. In particular, each
class now corresponds either to uplink traffic (from the users to an access point)
or  to downlink traffic (from an access point to the users). We study  the
flow-level dynamics of CSMA under the time-scale separation assumption.
Specifically, we prove the suboptimality of standard CSMA in this context and
introduce a slight modification of CSMA, we refer to as \textit{flow-aware} CSMA,
which makes the algorithm optimal.

\subsection{Uplink vs.\,downlink}

For all $i=1,\ldots,N$, we denote by $U_i$ and $D_i$ the sets of  uplink and
downlink classes, respectively, associated with access point $i$. In the example of figure \ref{fig:down}, for instance, there are $N=2$ access points and  $K=6$ classes, with  $U_1=\{2\}$, $D_1=\{1,3\}$, $U_2=\{5\}$ and $D_2=\{4,6\}$.
An access point cannot transmit and receive on the same channel. In particular, 
those classes sharing the same access point, either in uplink or downlink, conflict
with each other. Formally, for all access points $i=1,\ldots,N$ and all classes
$k,l\in U_i\cup D_i$, we have $(k,l)\in E_j$ for each channel $j$ such that
$k,l\in V_j$. We assume that an access point cannot transmit data on more than
one channel at a time but is able to receive data on the $J$ channels simultaneously.

\begin{figure}[h]
\centering
\begin{minipage}[c]{0.6\linewidth}
\includegraphics[width=\linewidth]{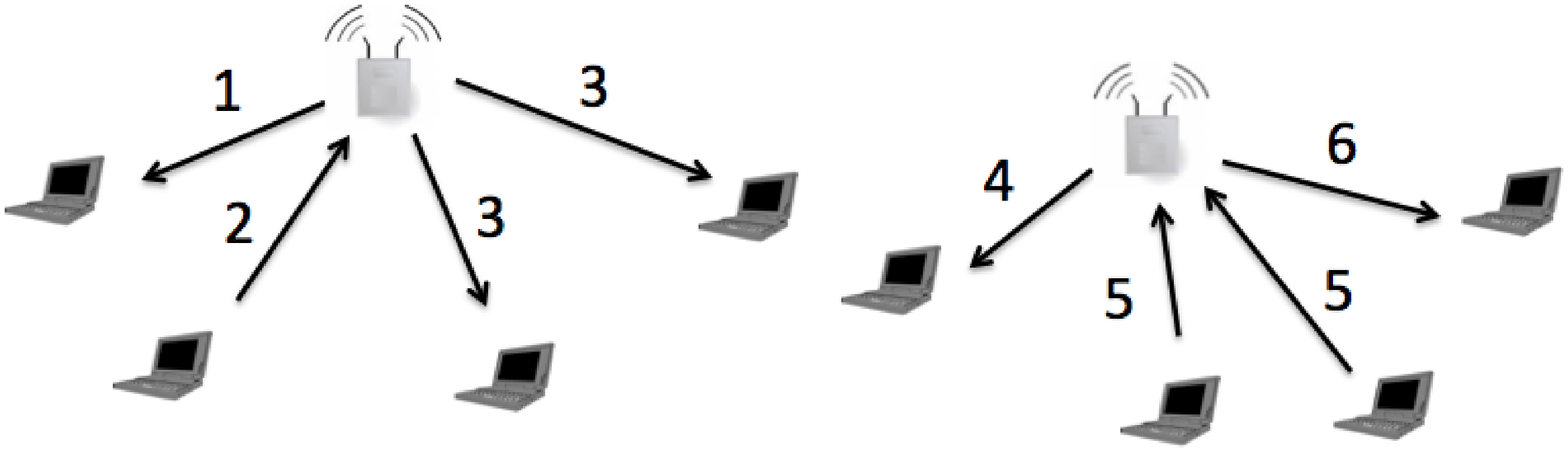}
\end{minipage}
\hspace{0.05\linewidth}
\begin{minipage}[c]{0.25\linewidth}
\includegraphics[width=\linewidth]{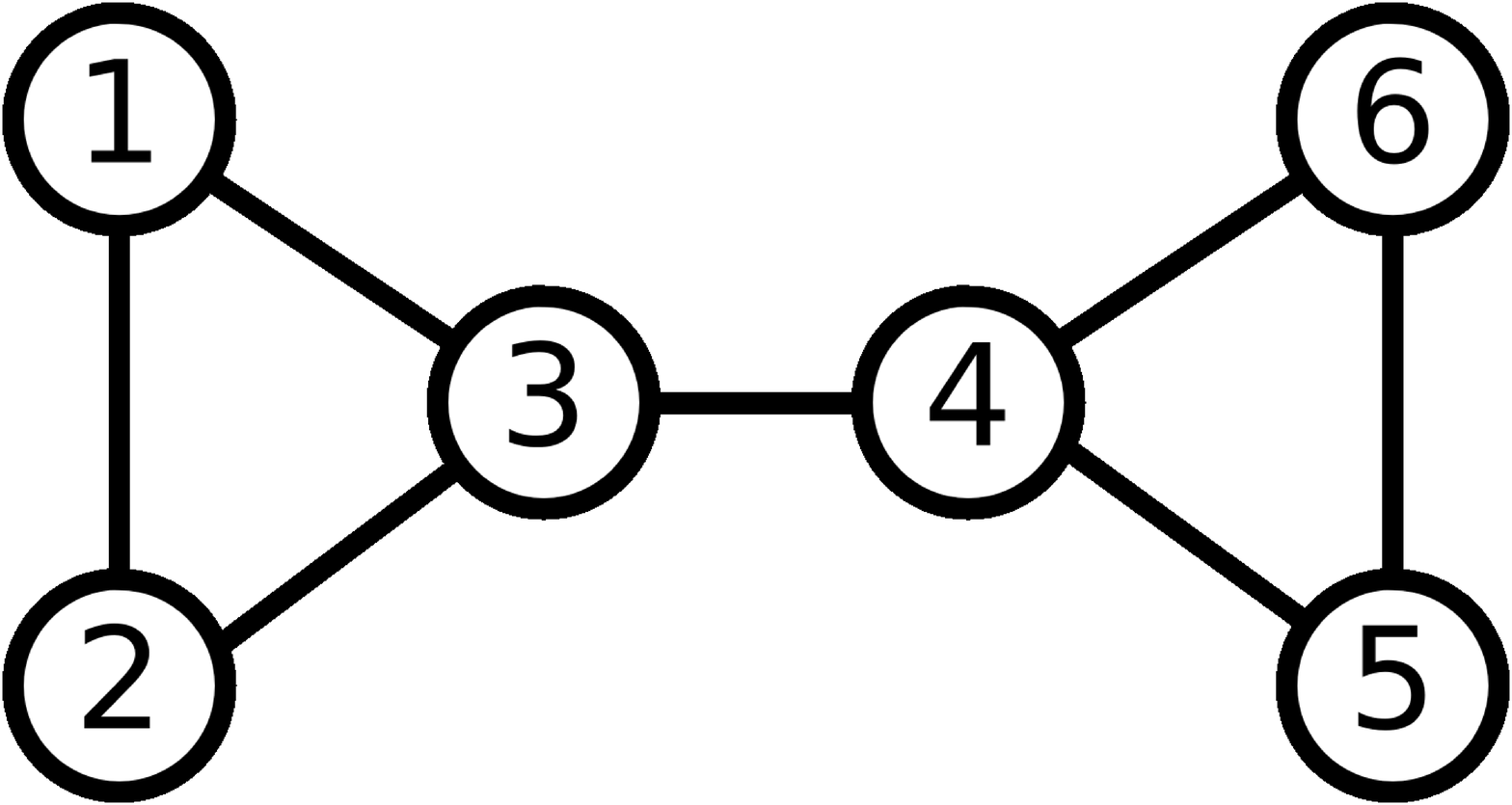}
\end{minipage}
\caption{A network of 2 access points with 6 classes of links and its interference graph.}
\label{fig:down}
\end{figure}

The feasible schedules are those defined in section \ref{sec:fs}, with the
additional constraint that each access point cannot transmit data on more than one
channel at a time, that is:
\begin{equation}\label{eq:yd}
\forall i=1,\ldots,N, \quad \sum_{k\in D_i} y_k\le 1.
\end{equation}
We denote by  $\Y(x)$ the set of feasible schedules and by $\Y$ the  union of $\Y(x)$
over all network states $x$.  The corresponding capacity region is defined in
section \ref{capa}.

\subsection{Standard CSMA}

We first consider the standard CSMA algorithm: each transmitter  waits for a period
of random duration before attempting transmission on some randomly chosen channel. The key difference with the 
ad-hoc wireless network considered so far is that each  access point runs a single instance of the CSMA algorithm for all its downlink traffic. In particular, 
for each access point $i$, the attempt rates $\nu_k$ are the same for all classes $k\in D_i$. At each attempt, the 
access point $i$ selects a class-$k$ flow with some probability proportional to $x_k$ and probes channel $j$ with probability  $\beta_{kj}$.
If the probed channel is sensed idle, a packet of this flow is transmitted.

It is worth noting that the attempt rate of each access point is independent of its congestion level, in terms of the number of ongoing downlink  flows at this access point. This breaks the natural stabilizing effect of CSMA we have proven in Theorem \ref{optim} in the context of ad-hoc networks, where those classes  with a higher number of flows get preferential access to the radio channels.
In the following, we illustrate the suboptimality of standard CSMA on two examples with downlink traffic only. 
Note that, in the presence of uplink traffic only, the model is in fact equivalent to the ad-hoc network considered so far. 

For this purpose, we give the distribution of feasible schedules achieved by the
algorithm under the time-scale separation assumption. Denoting by $Y(t)$ the
schedule at time $t$, we have the analogue of Proposition \ref{prop1}:

\begin{proposition}
If both the packet sizes and the
backoff times  have exponential distributions, then $Y(t)$  is a reversible Markov process, with stationary measure:
\begin{equation}\label{eq:wix2}
\begin{aligned}
w(x,y)=& \prod_{i=1}^N \prod_{k\in U_i:x_k>0} {x_k!\over (x_k-y_k)!}\alpha_k^{y_k}\prodj {\beta_{kj}^{y_{kj}}}\\
&\times
{\left( \sum_{k\in D_i} x_k\right)!}\prod_{k\in D_i:x_k>0} {\alpha_k^{y_k}\over x_k!} \prodj {\beta_{kj}^{y_{kj}}}
,\quad y\in \Y(x).
\end{aligned}
\end{equation}
\end{proposition}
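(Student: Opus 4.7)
\noindent The plan is to mirror Proposition 1: verify the local balance equations for $Y(t)$ with respect to the measure \eqref{eq:wix2}, establish reversibility, and invoke the same insensitivity argument as before to extend the result beyond exponential packet sizes and backoff times.

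First I would enumerate the transitions. For uplink classes $k\in U_i$ the dynamics are unchanged from the ad-hoc model of Proposition 1, namely $y\to y+e_{kj}$ at rate $(x_k-y_k)\nu_k\beta_{kj}$ and $y+e_{kj}\to y$ at rate $\varphi_k$, so the uplink factor $x_k!/(x_k-y_k)!\,\alpha_k^{y_k}\prodj\beta_{kj}^{y_{kj}}$ in \eqref{eq:wix2} satisfies local balance by the same computation as in Proposition 1. The work is concentrated in the downlink, where the single-instance CSMA at each access point modifies the forward rate: for $k\in D_i$, the constraint \eqref{eq:yd} forces $y\to y+e_{kj}$ to be enabled only when $\sum_{l\in D_i}y_l=0$, while the reverse rate remains $\varphi_k$ because a class-$k$ packet completes at the link's physical rate.

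Next I would compute the ratio $w(x,y+e_{kj})/w(x,y)$ for such a downlink step. The prefactor $\bigl(\sum_{l\in D_i}x_l\bigr)!$ and the $\prod_{l\in D_i}1/x_l!$ terms are independent of $y$ and cancel, so only the change in $\alpha_k^{y_k}$ and $\beta_{kj}^{y_{kj}}$ — each exponent moving from $0$ to $1$ — contributes, yielding ratio $\alpha_k\beta_{kj}$. Local balance therefore pins the downlink forward rate to $\varphi_k\alpha_k\beta_{kj}=\nu_k\beta_{kj}$, which is what the single-instance CSMA at access point $i$ produces once the attempt rate and the joint class/channel selection probabilities are combined.

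The main bookkeeping obstacle is handling the two simultaneous constraints consistently — the conflict-graph constraints from $\Y$ and the at-most-one-downlink-per-AP constraint \eqref{eq:yd} — so that forbidden transitions are absent from both sides of each balance equation and the downlink balance is only written for pairs $(y,y+e_{kj})$ with both endpoints in $\Y(x)$. Once local balance holds on every such adjacent pair, reversibility follows, \eqref{eq:wix2} is a stationary measure (normalizing to give $\pi(x,\cdot)$), and the phase-type insensitivity argument noted after Proposition 1 extends the result to general distributions of packet sizes and backoff times.
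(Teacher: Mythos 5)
Your overall strategy (local balance, mirroring Proposition 1) is exactly the paper's, your uplink verification is correct, and your computation of the ratio $w(x,y+e_{kj})/w(x,y)=\alpha_k\beta_{kj}$ for a downlink step of \eqref{eq:wix2} is also correct. The gap is in the transition rates you assign to the downlink: you reverse-engineer the forward rate from the measure and declare it to be $\nu_k\beta_{kj}$, asserting that this ``is what the single-instance CSMA at access point $i$ produces.'' It is not. The algorithm is specified as: access point $i$ attempts at rate $\nu_k$ (the same for all $k\in D_i$), selects a class-$k$ flow with probability proportional to $x_k$, i.e.\ with probability $x_k/\sum_{l\in D_i}x_l$, and then probes channel $j$ with probability $\beta_{kj}$. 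The rate of the transition $y\to y+e_{kj}$ for $k\in D_i$ is therefore $\tfrac{x_k}{\sum_{l\in D_i}x_l}\nu_k\beta_{kj}$, which is precisely the rate the paper writes in its own local balance equation. Your rate $\nu_k\beta_{kj}$ drops the class-selection probability; summed over $k\in D_i$ it would give the access point a total attempt rate $\sum_{k\in D_i}\nu_k$, i.e.\ one CSMA instance per downlink \emph{class} rather than a single instance per access point. Since the $x$-independence of each access point's attempt rate is the whole point of this section (it is what makes standard CSMA suboptimal), this is not a cosmetic slip: you have verified stationarity for a different Markov process.

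Your ratio computation is nonetheless worth keeping, because combined with the correct rate it shows that local balance, $w(x,y)\tfrac{x_k}{\sum_{l\in D_i}x_l}\nu_k\beta_{kj}=w(x,y+e_{kj})\varphi_k$, requires the ratio $\tfrac{x_k}{\sum_{l\in D_i}x_l}\alpha_k\beta_{kj}$, whereas the displayed measure \eqref{eq:wix2} yields $\alpha_k\beta_{kj}$ (the factor $\bigl(\sum_{k\in D_i}x_k\bigr)!\prod_{k\in D_i}1/x_k!$ being independent of $y$, as you note). A downlink factor that does balance is, for instance, $\tfrac{(\sum_{k\in D_i}(x_k-y_k))!}{(\sum_{k\in D_i}x_k)!}\prod_{k\in D_i:x_k>0}\tfrac{x_k!}{(x_k-y_k)!}\alpha_k^{y_k}\prodj\beta_{kj}^{y_{kj}}$, which under the constraint \eqref{eq:yd} equals $\prod_{k\in D_i}\bigl(\tfrac{x_k\alpha_k}{\sum_{l\in D_i}x_l}\bigr)^{y_k}\prodj\beta_{kj}^{y_{kj}}$. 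The correct procedure is to start from the algorithm's transition rates and let them dictate the measure, rather than adjusting the dynamics until a given formula balances.
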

\begin{proof}
As for Proposition \ref{prop1}, the proof follows from the local balance equations. For all $i,\ldots,N$, we have:
$$
\forall k\in U_i,\quad w(x,y) (x_k-y_k)\nu_k\beta_{kj}=w(x,y+e_{kj}){\varphi_k},
$$
and
$$
\forall k\in D_i,\quad w(x,y) {x_k\over \sum_{k\in D_i}x_k}\nu_k\beta_{kj}=w(x,y+e_{kj}){\varphi_k}.
$$
\ep
\end{proof}

The stationary distribution of the schedules $\pi(x,y)$ follows from normalization.
Again, it is insensitive to the  packet size and backoff time distributions beyond
the means. The throughput of class $k$ is given by \eqref{eq:phix}.

\paragraph{Example 1}
The most simple example showing the suboptimality of CSMA is shown in Figure
\ref{fig:ref}. It consists of $N=3$ access points, a single class per access point
and a single channel.  Taking unit physical rates, the optimal stability region is
$\rho_1+\rho_2<1$ and $\rho_2+\rho_3<1$ where 1 and 3 are the edge classes and 2
is the center class. We have proven in \cite{BF-10} that the actual stability
region is strictly smaller, even in the limiting case of infinite attempt rates.

\begin{figure}[h]
\centering
\begin{minipage}[c]{0.6\linewidth}
\includegraphics[width=\linewidth]{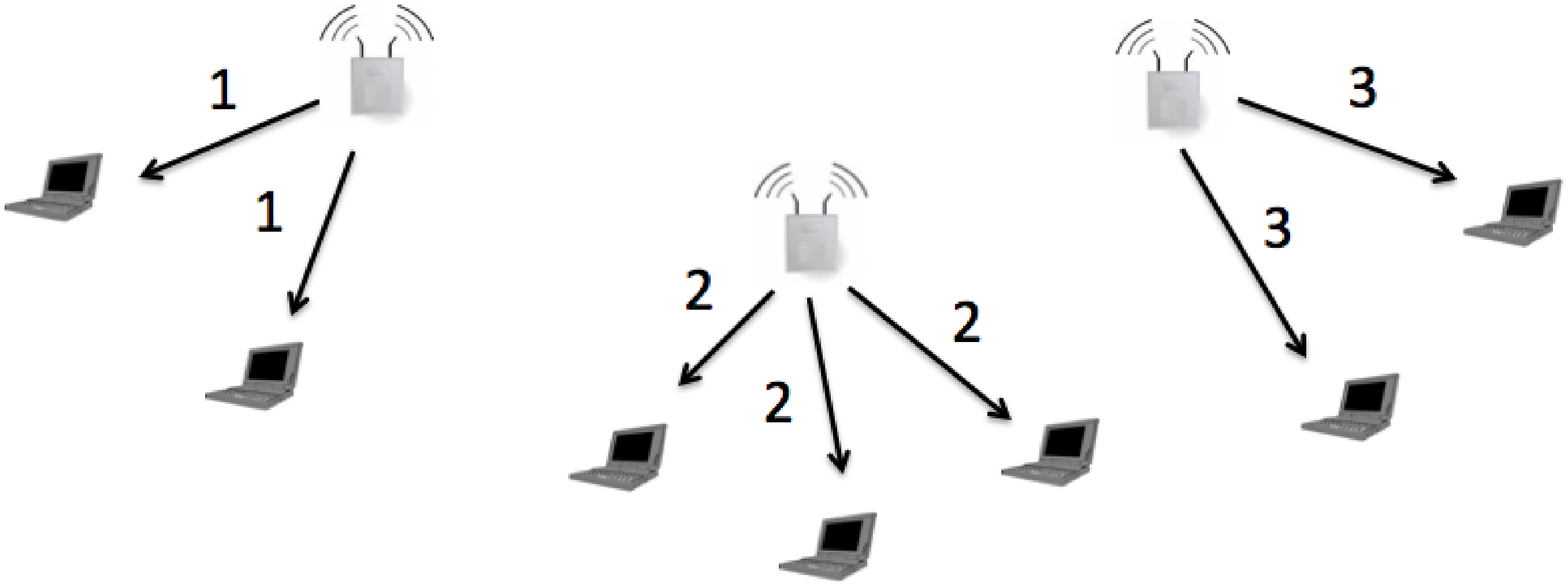}
\end{minipage}
\hspace{0.05\linewidth}
\begin{minipage}[c]{0.25\linewidth}
\includegraphics[width=\linewidth]{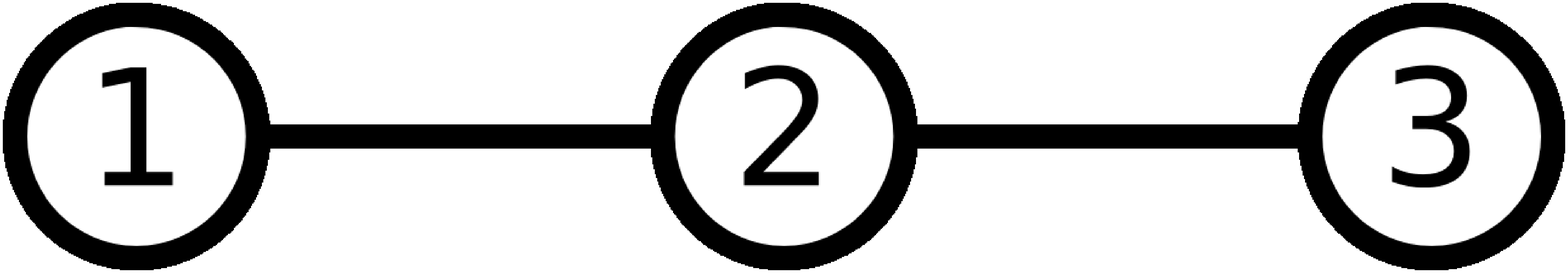}
\end{minipage}
\caption{Network of 3 access points with a single downlink class per access point and its interference graph.}
\label{fig:ref}
\end{figure}

\paragraph{Example 2}
Consider the multi-channel network of Figure \ref{fig:bow}  with $N=5$
access points, a single class per access point and $J=2$ channels, further referred to as the \textit{bow tie network}.
The conflict graph is the same for both channels.
We refer to class 3 as the center class and to the other classes as the edge classes.
We assume that the mean packet sizes and the  mean backoff times are the same for all classes, so that $\alpha_k=\alpha$ for all $k=1,\ldots,5$, for some $\alpha>0$.
We also assume that all classes except class 3 have the same traffic intensities.
The optimal stability condition is then given by:
\begin{equation}\label{eq:opt-tie}
\rho_3<1 \quad  \mathrm{and}\quad 2\rho_1+\rho_3<2.
\end{equation}

\begin{figure}[ht]
\centering
\begin{minipage}[c]{0.6\linewidth}
\includegraphics[width=\linewidth]{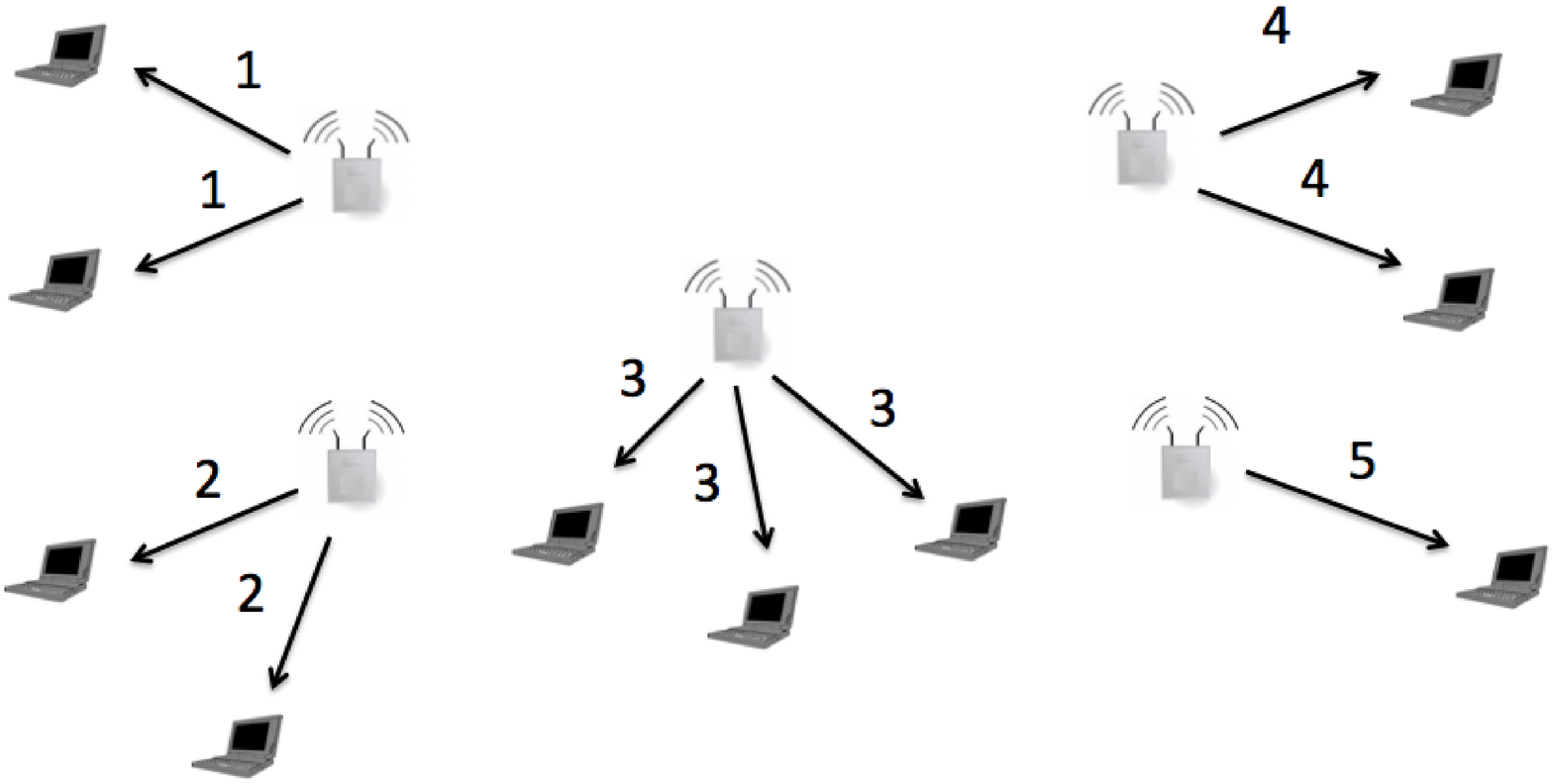}
\end{minipage}
\hspace{0.05\linewidth}
\begin{minipage}[c]{0.25\linewidth}
\includegraphics[width=\linewidth]{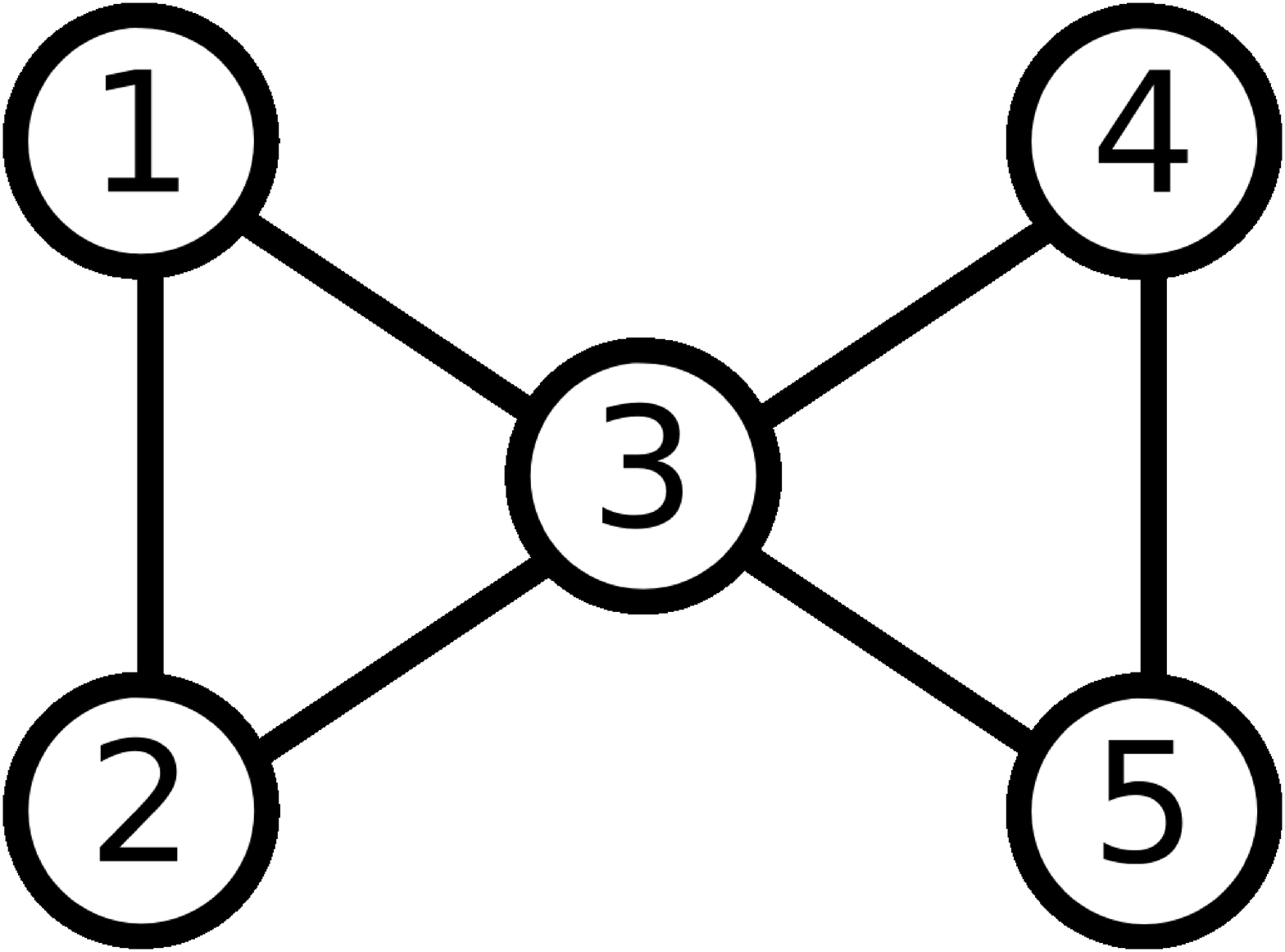}

\end{minipage}
\caption{Network of 5 access points with a single downlink class per access point and its interference graph.}
\label{fig:bow}
\end{figure}

We consider the limiting case where $\alpha\rightarrow\infty$ and we assume that
the two channels are chosen uniformly at random. We then deduce from \eqref{eq:wix}-\eqref{eq:phix} the following throughput vector:
\begin{equation}\label{eq:mu5}
\phi(x)=\left\{
\begin{array}{lll}
(1,1,0,1,1)&\text{if }&x_1, x_2, x_4, x_5>0,\\
\left(\frac{3}{4},\frac{3}{4},\frac{1}{2},1,0\right)&\text{if }&x_1, x_2, x_3, x_4>0, x_5=0,\\
\left(\frac{2}{3},\frac{2}{3},\frac{2}{3},0,0\right)&\text{if }& x_1, x_2, x_3>0, x_4=x_5=0,\\
(0,1,1,1,0)&\text{if }&x_2, x_3, x_4>0, x_1=x_5=0,\\
(1,1,0,0,0)&\text{if }&x_1,x_2>0, x_3=x_4=x_5=0,\\
(1,0,0,0,0)&\text{if }&x_1>0,x_2=0, x_3=x_4=x_5=0.
\end{array}
\right.
\end{equation}
The other cases follow by symmetry.
The center class is in conflict with all other classes for accessing
the channels and  is either not served when the 4  other classes are active or
served at a low rate when 3 other classes are active.
This also  results in a suboptimal stability region:

\begin{proposition}
The bow tie network is unstable whenever:
\begin{equation}\label{eq:subopt-tie}
\rho_3>\frac{1}{3}\rho_1^4 - \frac{2}{3}\rho_1^3 - \frac{2}{3} \rho_1^2 +1.
\end{equation}
\label{prop:suboptimal-2}
\end{proposition}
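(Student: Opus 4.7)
\bigskip\noindent\textbf{Proof plan.} The plan is to proceed by contradiction: assume the Markov process $X(t)$ is ergodic, with stationary distribution $\pi$. Since arrivals occur at rate $\rho_k$ (in bit/s) and departures at rate $\phi_k(X)$, flow-level conservation at equilibrium gives $E_\pi[\phi_k(X)] = \rho_k$ for every class $k$; in particular $E_\pi[\phi_3(X)] = \rho_3$. The task then reduces to proving the upper bound $E_\pi[\phi_3(X)] \le g(\rho_1)$, where $g(\rho_1) := \tfrac{1}{3}\rho_1^4 - \tfrac{2}{3}\rho_1^3 - \tfrac{2}{3}\rho_1^2 + 1$, in contradiction with the standing hypothesis.

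By the symmetry of the network, $\phi_3(X)$ depends on $X$ only through the counts of empty edge classes on each side. Introduce
\[
L = \ind_{\{x_1 = 0\}} + \ind_{\{x_2 = 0\}} \et R = \ind_{\{x_4 = 0\}} + \ind_{\{x_5 = 0\}},
\]
both valued in $\{0,1,2\}$. Reading \eqref{eq:mu5} and extending by symmetry, $\phi_3(X) = f(L,R)$, where $f(0,0) = 0$, $f(1,0) = f(0,1) = 1/2$, $f(2,0) = f(0,2) = 2/3$, and $f(l,r) = 1$ in the remaining cases; a direct check shows that $f$ is componentwise non-decreasing on $\{0,1,2\}^2$.

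The core of the argument is a stochastic-domination statement. Let $(\tilde x_k)_{k\in\{1,2,4,5\}}$ be four independent $\mathrm{Geom}(\rho_1)$ random variables and set $L^* = \ind_{\{\tilde x_1 = 0\}} + \ind_{\{\tilde x_2 = 0\}}$ and $R^*$ analogously. The critical observation is that on the interior event $\{x_1 x_2 x_4 x_5 > 0\}$, each edge class is served at rate exactly $1$ independently of $x_3$; therefore, while staying in that region, the flow-level dynamics of the edges coincide with those of the same network with class $3$ removed, which decouples into four independent M/M/1 queues of load $\rho_1$. Activating class $3$ can only steal service from the edges at boundary transitions, so a monotone Markov coupling yields $x_k \ge_{\mathrm{st}} \tilde x_k$ componentwise, which implies $L \le L^*$ and $R \le R^*$ in the coupling. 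Since $f$ is non-decreasing, this translates into
\[
E_\pi[\phi_3(X)] = E_\pi[f(L,R)] \le E[f(L^*,R^*)],
\]
and direct expansion of the right-hand side with $P(L^* = l) = \binom{2}{l}\rho_1^{2-l}(1-\rho_1)^l$ reproduces exactly $g(\rho_1)$, providing the desired contradiction.

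The most delicate step is establishing the \emph{joint} componentwise stochastic domination of the four edge classes, rather than the easy marginal inequality $P_\pi(x_k = 0) \le 1 - \rho_1$ that follows from $E_\pi[\phi_k(X)] = \rho_1$ together with $\phi_k \le 1$. Should the explicit Markov coupling prove elusive, one may instead analyse the fluid limit of $X(t)$: by symmetry the state collapses to $(y,z) = (\bar x_1 = \bar x_2 = \bar x_4 = \bar x_5, \bar x_3)$, and one shows that the time-averaged drift of $\bar x_3$ along every fluid trajectory equals $\rho_3 - g(\rho_1) > 0$, so Dai's fluid-model criterion then gives instability of $X(t)$.
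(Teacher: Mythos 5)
Your proposal is essentially correct and rests on the same two pillars as the paper's proof: (i) a monotone comparison of the edge queues with four \emph{independent} M/M/1 queues of load $\rho_1$ (obtained by serving each nonempty edge class at the full rate $1$, which can only shrink the edge queues since $\phi_k(x)\le 1$), and (ii) the computation of the class-3 throughput averaged against the resulting product-geometric law, which indeed equals $\frac13\rho_1^4-\frac23\rho_1^3-\frac23\rho_1^2+1$. Where you diverge is in how the conclusion is extracted. The paper pushes the comparison all the way to the process level ($\tilde X(t)\le X(t)$ a.s.) and then runs a fluid-limit argument on the \emph{modified} system, showing $\bar X_3(t)$ grows linearly and invoking \cite{meyn-95} to get transience, which transfers back to $X$. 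You instead argue by contradiction from the stationary rate-conservation identity $E_\pi[\phi_3(X)]=\rho_3$ combined with $E_\pi[f(L,R)]\le E[f(L^*,R^*)]=g(\rho_1)$. Your route is more elementary (no fluid limits, no transience criterion) and directly yields non-ergodicity, which is exactly the paper's definition of instability; the paper's route is heavier but delivers the strictly stronger conclusion of transience. Your observation that $\phi_3$ depends on the state only through the monotone function $f(L,R)$ of the empty-edge counts, checked against \eqref{eq:mu5}, is correct, as is the binomial expansion giving $g(\rho_1)$.

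Two caveats. First, the monotone coupling is asserted rather than constructed; it does work (same arrival streams, thinned departures using $\phi_k(x)\le 1$ preserve $\tilde x_k\le x_k$ for the edges, whose modified dynamics do not depend on $x_3$), and the paper is equally terse here, but you should also dispose of the case $\rho_1\ge 1$ separately (the comparison queues are then not ergodic, so neither is $X$, and $g(\rho_1)\le 0$ makes the hypothesis vacuous there). Second, your fallback fluid-limit sketch is not sound as stated: for the \emph{original} dynamics, once the edge fluid components hit zero their local (fast-time-scale) equilibrium is \emph{not} the independent product of geometrics --- class 3 depresses the edge service rates, so the edges are busier than $\rho_1$ and the drift of $\bar x_3$ is not exactly $\rho_3-g(\rho_1)$. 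This is precisely why the paper first replaces $\phi$ by $\tilde\phi$ and only then takes fluid limits. Since your primary argument does not rely on this fallback, the proof stands, but the fallback paragraph should be deleted or corrected.
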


This proposition is proven in the appendix.
In the homogeneous case $\rho_1=\rho_3$ for instance, Proposition \ref{prop:suboptimal-2} implies that the network is unstable whenever $\rho_1>0.63$.
In view of \eqref{eq:opt-tie}, the optimal stability condition is $\rho_1<2/3$, which
shows that the standard CSMA algorithm is not optimal.  This suboptimality is illustrated by Fig.~\ref{fig:suboptimal},
the actual stability condition being obtained by the simulation of the underlying Markov process.
 In the homogeneous case   for instance, the  loss of efficiency is around 15\%.

\begin{figure}[ht]
\begin{center}
\includegraphics[width=.4\linewidth,angle=-90]{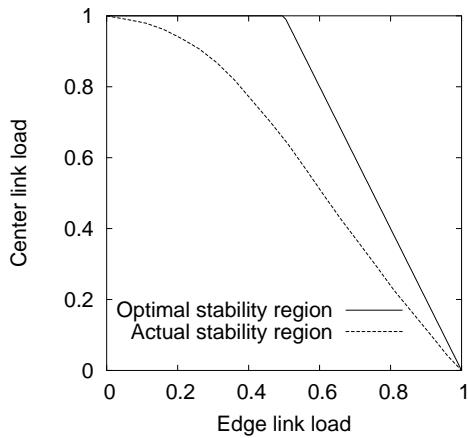}
\caption{Stability region of the bow-tie network with two channels under standard CSMA.}
\label{fig:suboptimal}
\end{center}
\end{figure}

\subsection{Flow-aware CSMA}

The flow-aware CSMA algorithm consists for each access point to run one standard
CSMA algorithm per flow. This compensates for the inherent bias of standard
CSMA against downlink flows and stabilizes the network whenever possible. Indeed,
the stationary measure of the schedules is now given by \eqref{eq:wix}. The only
difference with the ad-hoc wireless network considered in section \ref{sec:flow-level}
is the additional constraint \eqref{eq:yd} on the set of feasible schedules.
This does not change the proof of Theorem \ref{optim}, showing the optimality of
flow-aware CSMA.

\section{Conclusion}
\label{sec:conc}
We have proved that, under the time-scale separation assumption, the distributed scheduling achieved by standard CSMA  exploits the radio
resources in an optimal way  in  ad-hoc wireless networks. This is not the case in the presence of access points,
due to the inherent bias of CSMA against downlink traffic.  A slight modification of CSMA we refer to as flow-aware CSMA is then sufficient to correct this bias and to make the 
algorithm optimal. 

The analysis relies on a number of simplifying assumptions that we plan to relax
in future work. First,  we have neglected the impact of packet collisions; these
could be included in the model, as done in \cite{walrand-09} for rate-based adaptive CSMA
for instance. One may then account for the adaptive backoff of the IEEE 802.11 protocol, which is key in practice to limit the number of  collisions. Other issues that may be worth  addressing concern the traffic model.
We have neglected the impact of acknowledgements, which are known to be critical in IEEE 802.11 networks. 
The impact of real-time traffic
should also be considered. Finally, one may think of multi-hop networks where the
flows of some source-destination pairs must go through one or several relay
nodes. Although we believe that flow-aware CSMA is still optimal in this more
general settings, we have not yet been able to prove this result. 

From a more theoretical perspective, one may relax the assumption of Poisson flow arrivals and
exponential flow sizes in the stability analysis. One may for instance consider user sessions that
consist of an alterning series of file transfers and idle periods.
We would also like to extend Proposition \ref{bipart} to any interference graph,  which would prove the validity of Theorem \ref{optim} in the absence of the time-scale separation assumption.

\appendix

\section*{Appendix}

\paragraph{Proof of Lemma 1}
For any  class $k$, let:
$$
\beta_k=\min_{j:k\in V_j}\beta_{kj}.
$$
Note that $\beta_k>0$.
We have for all $y\in \Y(x)$:
$$
w(x,y)\ge  \prod_{k:x_k>0} {x_k(x_k-1)\ldots  (x_k-y_k+1)\over x_k^{y_k}}{\beta_k^J} u(x,y).
$$
If $x_k\le 2J$, we have:
$$
{x_k(x_k-1)\ldots  (x_k-y_k+1)\over x_k^{y_k}}\ge {1\over x_k^{y_k}} \ge {1\over (2J)^J}.
$$
Otherwise, we have using the fact that $y_k\le J$ for all $k=1,\ldots,K$:
$$
{x_k(x_k-1)\ldots  (x_k-y_k+1)\over x_k^{y_k}}\ge \left({x_k-y_k+1\over x_k}\right)^{y_k} \ge {1\over 2^J}.
$$
Combining these results, we obtain the existence of some constant $m>0$ such that:
$$
\forall y\in \Y(x),\quad w(x,y)\ge  m u(x,y).
$$

Now let:
 $${\cal Z}(x)=\left\{y\in \Y(x): \log(u(x,y))\ge  (1-{\epsilon\over 2}) \log(u(x))\right\}.$$
We have:
$$
\sum_{y\in \Z(x)} \pi(x,y)\log(u(x,y))\ge (1-{\epsilon\over 2}) \log(u(x))\sum_{y\in {\cal Z}(x)} \pi(x,y).
$$
Using the fact that $w(x,y)\le u(x,y)$ for all $y\in \Y(x)$, we get:
\begin{align*}
\sum_{y \in \Y(x)\setminus \Z(x)} \pi(x,y)&=  {\sum_{y \in \Y(x)\setminus \Z(x)} w(x,y)\over \sum_{y\in \Y(x)} w(x,y)}, \\
&\le   {1\over m}  {\sum_{y\in \Y(x)\setminus \Z(x)} u(x,y)\over \sum_{y\in \Y(x)} u(x,y)}, \\
&\le   {1\over m} {M u(x)^{1-{\epsilon\over 2}}\over  \max_{y\in\Y(x)}u(x,y)}, \\
&=  {1\over m} {M \over  u(x)^{\epsilon\over 2}},
\end{align*}
where $M$ denotes the total number of schedules (that is, the cardinal of $\Y$).
Since $u(x)$ tends to $+\infty$ when $|x|\equiv \sum_k x_k$ tends to  $+\infty$, this quantity
is less than $\epsilon/2$ for all states $x$ but some finite number. In those states, we have:
$$
\sum_{y\in {\cal Z}(x)} \pi(x,y)\ge 1-{\epsilon\over 2}.
$$
We deduce that in all states $x$ but some finite number:
\begin{align*}
\sum_{y\in\Y(x)}\pi(x,y)\log(u(x,y))&\ge (1-{\epsilon\over 2})^2 \log(u(x)),\\
&\ge (1-{\epsilon}) \log(u(x)).
\end{align*}
\ep

\paragraph{Proof of Lemma 2}

Let:
$$
v(x,y)= \prod_{k:x_k\ge J}(x_k{\alpha_k})^{y_k}.
$$
There are some positive constants $m,M$ such that:
$$
\forall x\in \N^K,\ \forall y\in \Y,\quad m\le {u(x,y)\over v(x,y)}\le M.
$$
The proof then follows from the fact that:
$$
v(x)=\max_{y\in \Y}u(x,y)
\le M\max_{y\in \Y}v(x,y)
= M\max_{y\in \Y(x)}v(x,y)
\le {M\over m}\max_{y\in \Y(x)} u(x,y)
= {M\over m} u(x).
$$
\ep

\paragraph{Proof of Theorem 1}
\label{proof:th1}

If the vector of traffic intensities lies in the interior of the capacity region, there exist some $\epsilon>0$ and some probability measure $\pi$ on $\Y$  such that:
\begin{equation}\label{eq:stab}
\forall k=1,\ldots,K,\quad\rho_k=\varphi_k(1-2\epsilon)\sum_{y\in \Y}\pi(y)y_k.
\end{equation}
Note that we can choose $\pi(y)>0$ for all $y\in \Y$.

Define the Lyapunov function:
$$
F(x)=\sum_{k:x_k>0} {x_k\sigma_k\over\varphi_k} \log(x_k\alpha_k).
$$
The corresponding drift  is given by:
\begin{align*}
\Delta F(x)=&\sum_{k} \lambda_k (F(x+e_k)-F(x))
+ \sum_{k:x_k>0} {\phi_k(x)\over \sigma_k} (F(x-e_k)-F(x)),
\\
=&\sum_{k:x_k=0} {\rho_k\over \varphi_k} \log(\alpha_k) +\sum_{k:x_k>0} {\rho_k\over \varphi_k} \left((x_k+1)\log((x_k+1)\alpha_k) -x_k\log(x_k\alpha_k)\right)\\
&+ \sum_{k:x_k>0} {\phi_k(x)\over\varphi_k} \left((x_k-1)\log((x_k-1)\alpha_k)-x_k\log(x_k\alpha_k)\right).
\end{align*}
In particular, we have $\Delta F(x)=G(x)+H(x)$ with:
$$
G(x)=\sum_{k:x_k>0}{\rho_k-\phi_k(x) \over \varphi_k}\log(x_k\alpha_k),
$$
$$
H(x)=
\sum_{k:x_k>0}{\rho_k\over \varphi_k}(x_k+1)\log(1+{1\over x_k})
+\sum_{k:x_k>0}{\phi_k(x) \over \varphi_k}(x_k-1)\log(1-{1\over x_k})+\sum_{k:x_k=0} {\rho_k \over \varphi_k}\log(\alpha_k),
$$
where we use  the convention $0\log(0)\equiv 0$.
Since $\phi_k(x)\le J\varphi_k $, the function $H(x)$ is bounded.
Regarding $G(x)$, it follows from
 (\ref{eq:weight}) and (\ref{eq:stab}) that:
\begin{align*}
G(x)&= \sum_{y\in\Y}((1-2\varepsilon)\pi(y)-\ \pi(x,y))\sum_{k:x_k>0}y_k
  \log(x_k\alpha_k),\\
  &= \sum_{y\in \Y}((1-2\varepsilon)\pi(y) -\pi(x,y))\log(u(x,y)).
\end{align*}
By Lemma 1, we have for all states $x$ but some finite number:
\begin{align*}
G(x)&\le -{\epsilon} \sum_{y\in\Y} \pi(y) \log(u(x,y))+ (1-{\epsilon})\left(\sum_{y\in\Y} \pi(y) \log(u(x,y))-\log(u(x))\right),\\
& \le -{\epsilon} \sum_{y\in\Y} \pi(y) \log(u(x,y))+ (1-{\epsilon})\log\left({v(x)\over u(x)}\right).
\end{align*}
Since $\pi(y)>0$ for all $y\in\Y$, the first term tends to $-\infty$ when  $|x|\equiv \sum_k x_k$ tends to  $+\infty$. By Lemma 2, the second term is bounded. We deduce the existence of some $\delta>0$ such that $\Delta F(x)\le -\delta$ for all states $x$ but some finite number. The proof then
 follows from  Foster's criterion.
 \ep

\paragraph{Proof of Theorem 2}
In the following, we consider $(X^N(t))_{N\ge 1}$ as a sequence of stochastic processes
in the space $\D_{\N^{K}}([0,\infty[)$ of \textit{c\`ad-l\`ag} functions with values in $\N^K$ with the Skorohod topology.

First, we have to prove the tightness of the sequence $(X^N(t))$. It is enough
to remark that, for all $N\ge 1$, $X^N_k(t)$ is stochastically dominated by a Poisson
process of intensity  $\lambda_k$ and stochastically dominates an $M/M/1$ queue with arrival rate
$\lambda_k$ and service rate $\varphi_k/\sigma_k$. Thus, the conditions of the Arzel\`a-Ascoli theorem are
fulfilled and the sequence $(X^N(t))$ is tight  (see \cite[Th 12.3]{billingsley-99}).

We now consider a bounded function $f$ on $\N^K$. Denote by $\Omega^N$ the infinitesimal generator of the
Markov process $(X^N(t),Y^N(t))$. For $x\in\N^K$ and $y\in\Y$, we have
$$
\Omega^N(f)(x,y) = \sumk \lambda_k (f(x+e_k)-f(x)) - \sumk \varphi_k/\sigma_k  \sumj y_{kj} (f(x-e_k)-f(x)).
$$
 According to the Martingale characterization
of Markov jump processes (see \cite{rogers-00}), the process:
$$
M_f^N(t) = f(X^N(t)) - f(X^N(0)) - \int_0^t \Omega^N(f)(X^N(s),Y^N(s))\diff s
$$
is a locale martingale and, since the process $X^N(t)$ is not exploding on $[0,t]$
(it is stochastically dominated by a Poisson process), it is a martingale.

For each $N\ge 1$, define the random measure:
$$
\Gamma^N([0,t]\times B) = \int_0^t \ind_{\{Y(s)\in B\}} \diff s,\quad \text{for } B\subset \Y
$$
$\Gamma^N$ is a random variable with value in the set $\mathcal{L}(\Y)$ of the random measures
on $[0,\infty[\times\Y$ such that if $\mu\in\mathcal{L}(\Y)$ then $\mu([0,t]\times\Y) = t$
for all $t\geq 0$. Since $\Y$ is finite, the set $\mathcal{L}(\Y)$ is compact
and then the sequence $(\Gamma^N)_{N\ge 1}$ is relatively compact.

Assume that the sequence $(X^N(t),\Gamma^N)_{N\ge 1}$ tends to  some limit $(X(t),\Gamma)$.
Since:
$$
\int_0^t \Omega^N(f)(X^N(s),Y^N(s))\,\diff s = \int_0^t \sumy \Omega^N(f)(X^N(s),y) \Gamma^N(\diff s\times \diff y)
$$
and $f$ is bounded,  this random variable tends in distribution to:
$$
\int_0^t \sumy \Omega(f)(X(s),y) \Gamma(\diff s\times \diff y).
$$
It remains to  characterize $\Gamma$. According to Lemma 1.3 of \cite{kurtz-92},
there exists a set of random probability measures $\vartheta(t,.)$ on $\Y$ such that:
$$
\Gamma([0,t]\times B)= \int_0^t \vartheta(s,B)\diff s ,\quad \text{for } B\subset \Y.
$$
For any function $g$ on $\Y$, we define the martingale:
$$
\bar{M}_g^N(t) = \frac{1}{N}\left(g(Y^N(t)) - g(Y^N(0)) - \int_0^t \Omega^N(g)(X^N(s),Y^N(s))\diff s \right).
$$
For $x\in\N^K$ and $y\in\Y$, we have:
\begin{align*}
\Omega^N(g)(x,y) = \sumk \sumj &N(x_k-y_k)\nu_k \beta_{kj} (g(y+e_{kj})-g(y))\\
& + \left( Ny_{kj} \varphi_k\left(1 - \frac{1}{\sigma_kN}\right) + \frac{\varphi_k}{\sigma_k}\right) (g(y-e_{kj})-g(y)).
\end{align*}
The increasing process of this martingale is:
\begin{align*}
\left\langle\bar{M}_g^N(t)\right\rangle &= \frac{1}{N^2}\int_0^t \Omega^N(g)(X^N(s),Y^N(s))\diff s,\\
&\leq \frac{2t }N \max_{y\in\Y}|g(y)| \left(\max_{k}\varphi_k + \max_{k}\nu_k \max_{k,j} \beta_{kj}\right).
\end{align*}
It tends to 0 on all compact sets so that the martingale tends in distribution to 0.
Since $\Y$ is finite, $g$ is bounded and $(g(Y^N(t)) - g(Y^N(0)))/N$ also tends to 0.
Finally, we get that: $${1\over N}\int_0^t \Omega^N(g)(X^N(s),Y^N(s))\diff s$$
converges in distribution to 0. This implies:
\begin{align*}
\int_0^t \sumy \Biggl( \sumk \sumj  &(X_k(s)-y_k)\nu_k \beta_{kj} (g(y+e_{kj})-g(y)) \\
&+ y_{kj} \varphi_k (g(y-e_{kj})-g(y))\Biggr) \vartheta(s,y) \diff s = 0
\end{align*}
and for almost every $s$ in $[0,t]$, we have:
\begin{align*}
\sumy \Biggl( \sumk \sumj  &(X_k(s)-y_k)\nu_k \beta_{kj} (g(y+e_{kj})-g(y))\\
&+ y_{kj} \varphi_k(g(y-e_{kj})-g(y))\Biggr) \vartheta(s,y) =0
\end{align*}
The probability distribution $\vartheta(s,.)$ is then the stationary distribution
given by \eqref{eq:wix}.

It follows that:
$$
\int_0^t \Omega^N(f)(X^N(s),Y^N(s))\,\diff s
$$
converges in distribution to:
$$
\int_0^t \Omega(f)(X(s))\,\diff s
$$
where $\Omega$ is the infinitesimal generator of the Markov process described
in section \ref{sec:flow-level}. For $x\in\N^K$, we have
$$
\Omega(f)(x) = \sumk \lambda_k \left(f(x+e_k)-f(x)\right) + \phi_k(x)/\sigma_k \left(f(x-e_k)-f(x)\right),
$$
where $\phi_k(x)$ is the mean throughput of class $k$ in state $x$, given by
 \eqref{eq:phix}.

By  dominate convergence, $M_f^N(t)$ tends in distribution to:
$$
M_f(t) = f(X(t)) - f(X(0)) - \int_0^t \Omega(f)(X(s))\diff s,
$$
and $M_f(t)$ is a martingale. Using the characterization of the Markov jump processes, we get that
the process $X(t)$ is a Markov process with infinitesimal generator $\Omega$.

This concludes the proof.
\ep

\paragraph{Proof of Proposition 2}
For this proof, we will need the notion of fluid limits. A fluid
limit  is a limiting point $\bar X^N(t)$ of the laws of the processes $\{X^N(n t)/n, n\ge 1\}$
in the set of probability measures on the space $\D_{\R_+^K}([0,\infty))$ of \textit{cad-lag}
functions with value in $\R_+^K$ with Skorohod topology (see \cite{billingsley-99}).
It is not difficult to show that the set of processes $\{X^N(n t)/n, n\ge 1\}$
is tight in the set of probability distributions on the
space $\D_{\R_+^K}([0,\infty))$ endowed with the metric associated to the
uniform norm on compact sets. Therefore, there exists at least one fluid limit
and any fluid limit is continuous.
Since the process $Y^N(n t)$ has its values in a finite
space for all $n\ge 1$, it can be proved as in \cite{dai-95,robert-03} that, if there exists a deterministic time $T>0$
such that $\bar X^N(t)=0$  for all $t\geq T$,  then
the Markov process $(X^N(t),Y^N(t))$ is ergodic. 

The  proof is then very similar to that given in \cite{proutiere} for random capture algorithms.
We consider a fluid limit $\bar X^N(t)$ and define:
$$
W^N(t) = \sum_{i=1}^L \max_{k \in C_l} \left(\bar X^N_k(t){\sigma_k\over \varphi_k}\right).
$$
When some  class in $C_l$ takes channel
$j$, all other classes in $C_l$ can take this channel while all classes in $\{1,\ldots,K\}\setminus C_l$ cannot.
 This implies:
$$
W^N(t) \leq \max\left(0, 1 + \left(\sum_{l=1}^L \max_{k \in C_l} \frac{\rho_k}{\varphi_k} -J\right)t\right).
$$
In the case of $L$-partite networks, the capacity region is given by the set of vectors $\phi$ such that:
$$
 \sum_{l=1}^L \max_{k \in C_l} \frac{\phi_k}{\varphi_k} \leq J.
$$
Since $\rho$ lies inside the capacity region, we have $W^N(t)=0$  for all $t\geq T$, with 
$$
T= {1\over J-\sum_{i=1}^L \max_{k \in C_l} \rho_k},
$$
which implies the ergodicity of the Markov process $(X^N(t),Y^N(t))$.
\ep

\paragraph{Proof of Proposition \ref{prop:suboptimal-2}}
Define the throughput vector $\tilde{\phi}$ such that $\tilde{\phi}_3(x)=\phi_3(x)$
 and $\tilde{\phi}_k(x)=\ind_{\{x_k>0\}}$ for all $k\ne 3$.
 It can be easily verified that  $\tilde{\phi}_k(x) \geq \tilde{\phi}_k(y)$ for all states
$x,y$  such that $x\le y$ and all $k$ such that $x_k>0$. Now
consider the coupling of the stochastic processes $X(t)$ and $\tilde{X}(t)$ describing the evolution of the queues
for the throughputs $\phi$ and $\tilde{\phi}$, respectively, starting from the same initial state
$X(0)=\tilde{X}(0)$.
It follows from the above monotonicity property that $\tilde{X}(t) \le X(t) $ a.s.~at any time $t\ge 0$.
In particular, the transience or the null recurrence of $\tilde X(t)$ implies that
of ${X}(t)$.

For the throughput vector $\tilde \phi(t)$, queues 1,2,4,5 are  independent
$M/M/1$ queues with load $\rho_1$. If $\rho_1 \geq 1$, the Markov process $\tilde X(t)$ is null recurrent or transient.
Note that \eqref{eq:subopt-tie} then reduces to $\rho_3\geq 0$.

Assume now that $\rho_1<1$. To prove the transience of
$\tilde X(t)$, we use fluid limits.
Since $\rho_1<1$ and for $\tilde \phi(t)$, queues 1,2,4,5 are  independent
$M/M/1$ queues with load $\rho_1$, there exists some
finite time after which, for any initial conditions,
the corresponding components of the fluid limit are null. We then just have to
consider the fluid limits with the
initial condition $\bar X_3(0)=1$ and $\bar X_k(0)=0$ for  all $k\neq 3$. In this case,
Proposition 9.14 of \cite[p.241]{robert-03} applies and the fluid limit satisfies:
$$
 \bar X_3(t)= 1+ \left( \lambda_3 - \frac{\bar \phi_3}{\sigma_3} \right)t,$$
  as long as this function is positive, where $\bar \phi_3$ is the throughput of link 3 averaged over the states of other links.
  Since each other link is active with probability $\rho_1$, it follows from \eqref{eq:mu5} that:
  $$
  \bar\phi_3=\frac{1}{3}\rho_1^4 - \frac{2}{3}\rho_1^3 - \frac{2}{3} \rho_1^2+1.
  $$
In particular,  $\bar{X}_3(t)$ increases linearly to infinity whenever inequality \eqref{eq:subopt-tie} is satisfied and, according to \cite{meyn-95},
the Markov process $\tilde{X}(t)$ is transient.
\ep

\bibliographystyle{plain}
\bibliography{biblio.bib}

\begin{thebibliography}{10}

\bibitem{Barakat03}
C.~Barakat, P.~Thiran, G.~Iannaccone, C.~Diot, and P.~Owezarski.
\newblock Modeling internet backbone traffic at the flow level.
\newblock {\em IEEE Transactions on Signal processing}, 51:2003, 2003.

\bibitem{ben01}
S.~{Ben Fredj}, T.~Bonald, A.~Prouti{\`e}re, G.~R{\'e}gni{\'e}, and J.~W.
  Roberts.
\newblock Statistical bandwidth sharing: a study of congestion at flow level.
\newblock In {\em Proceedings of {ACM SIGCOMM}}, pages 111--122, 2001.

\bibitem{BK00}
A.~W. Berger and Y.~Kogan.
\newblock Dimensioning bandwidth for elastic traffic in high-speed data
  networks.
\newblock {\em IEEE/ACM Trans. Netw.}, 8(5):643--654, 2000.

\bibitem{billingsley-99}
P.~Billingsley.
\newblock {\em Convergence of Probability Measures (second edition)}.
\newblock Wiley Series in Probability and Statistics. Wiley-Interscience, 1999.

\bibitem{bonald-07}
T.~Bonald.
\newblock Insensitive traffic models for communication networks.
\newblock {\em Discrete Event Dynamic Systems}, 17(3):405--421, 2007.

\bibitem{BF-10}
T.~Bonald and M.~Feuillet.
\newblock On the stability of flow-aware {CSMA}.
\newblock {\em Perform. Eval.}, 67:1219--1229, November 2010.

\bibitem{BF-ciss11}
T.~Bonald and M.~Feuillet.
\newblock On flow-aware {CSMA} in multi-channel wireless networks.
\newblock In {\em {CISS}}, 2011.

\bibitem{dai-95}
J.~G. Dai.
\newblock On positive {H}arris recurrence of multiclass queueing networks: a
  unified approach via fluid limit models.
\newblock {\em Annals of Applied Probabilities}, 5:49--77, 1995.

\bibitem{proutiere}
M.~Feuillet, A.~Prouti\`ere, and P.~Robert.
\newblock Random capture algorithms: Fluid limits and stability.
\newblock In {\em Information Theory and Applications Workshop}, February 2010.

\bibitem{heusse-03}
M.~Heusse, F.~Rousseau, G.~Berger-Sabbatel, and A.~Duda.
\newblock Performance anomaly of 802.11b.
\newblock In {\em INFOCOM}, volume~2, pages 836--843, 2003.

\bibitem{shah-09}
L.~Jiang, D.~Shah, J.~Shin, and J.~Walrand.
\newblock Distributed random access algorithm: Scheduling and congestion
  control.
\newblock {\em IEEE Transactions on Information Theory}, 56(12):6182 -- 6207,
  2011.

\bibitem{walrand-08}
L.~Jiang and J.~Walrand.
\newblock A distributed {CSMA} algorithm for throughput and utility
  maximization in wireless networks.
\newblock In {\em the 46th Annual Allerton Conference on Communication,
  Control, and Computing}, 2008.

\bibitem{walrand-09}
Libin Jiang and Jean Walrand.
\newblock Approaching throughput-optimality in a distributed {CSMA} algorithm:
  collisions and stability.
\newblock In {\em MobiHoc S3'09}, pages 5--8. ACM, 2009.

\bibitem{kim05}
S.~W. Kim, B.-S. Kim, , and Y.~Fang.
\newblock Downlink and uplink resource allocation in {IEEE} 802.11 wireless
  {LAN}s.
\newblock {\em IEEE Trans. Veh. Technol.}, 54:320–327, Jan. 2005.

\bibitem{kurtz-92}
T.G. Kurtz.
\newblock Averaging for martingale problems and stochastic approximation.
\newblock In {\em Applied Stochastic Analysis, US-French Workshop}, volume 177
  of {\em Lecture notes in Control and Information sciences}, pages 186--209.
  Springer Verlag, 1992.

\bibitem{MR00}
L.~Massouli{\'e} and J.~W. Roberts.
\newblock Bandwidth sharing and admission control for elastic traffic.
\newblock {\em Telecommunication Systems}, 15:185--201, 2000.

\bibitem{meyn-95}
S.~Meyn.
\newblock Transience of multiclass queueing networks via fluid limit models.
\newblock {\em Annals of Applied Probability}, 5:946--957, 1995.

\bibitem{srikant-10}
J.~{Ni}, B.~{Tan}, and R.~{Srikant}.
\newblock {Q-CSMA}: Queue-length based {CSMA/CA} algorithms for achieving
  maximum throughput and low delay in wireless networks.
\newblock In {\em IEEE INFOCOM}, 2010.

\bibitem{proutiere-10}
A.~{Prouti\`ere}, Y.~Yi, T.~Lan, and M.~Chiang.
\newblock Resource allocation over network dynamics without timescale
  separation.
\newblock In {\em IEEE INFOCOM}, 2010.

\bibitem{raja-09}
S.~Rajagopalan, D.~Shah, and J.~Shin.
\newblock Network adiabatic theorem: an efficient randomized protocol for
  contention resolution.
\newblock In {\em SIGMETRICS'09}, pages 133--144. ACM, 2009.

\bibitem{robert-03}
P.~Robert.
\newblock {\em Stochastic Networks and Queues}.
\newblock Stochastic Modeling and Applied Probability Series. Springer-Verlag,
  New York, 2003.

\bibitem{rogers-00}
L.~C.~G. Rogers and D.~Williams.
\newblock {\em Diffusions, {M}arkov processes \& martingales vol. 2: It\^o
  Calculus}.
\newblock Cambridge University Press, 2000 (1987).

\bibitem{serfozo}
R.~Serfozo.
\newblock {\em Introduction to Stochastic Networks}.
\newblock Springer-Verlag New York, 1999.

\bibitem{tassiulas-92}
L.~Tassiulas and A.~Ephremides.
\newblock Stability properties of constrained queueing systems and scheduling
  policies for maximum throughput in multihop radio networks.
\newblock {\em IEEE Transactions on Automatic Control}, 37:1936--1948, 1992.

\bibitem{VBLP10}
P.~M. van~de Ven, S.~C. Borst, J.~S.~H. van Leeuwaarden, and A.~Prouti\`{e}re.
\newblock Insensitivity and stability of random-access networks.
\newblock {\em Perform. Eval.}, 67:1230--1242, November 2010.

\end{thebibliography}

\end{document}